\title{On the Complexity of the ~~~~~~~~~~~~~~~~~~~Eigenvalue Deletion Problem} 
\titlerunning{On the Complexity of the Eigenvalue Deletion Problem}
\author{Neeldhara Misra}{Indian Institute of Technology, Gandhinagar \and \url{https://www.neeldhara.com}}{neeldhara.m@iitgn.ac.in}{https://orcid.org/0000-0003-1727-5388}{Supported by DST-SERB and IIT Gandhinagar.}
\author{Harshil Mittal}{Indian Institute of Technology, Gandhinagar}{mittal_harshil@iitgn.ac.in}{}{Supported by IIT Gandhinagar.}
\author{Saket Saurabh}{Institute of Mathematical Sciences, Chennai \and University of Bergen \and \url{https://sites.google.com/view/sakethome}}{saket@imsc.res.in}{https://orcid.org/0000-0001-7847-6402}{Supported by ERC, the University of Bergen, and IMSc.}
\author{Dhara Thakkar}{Indian Institute of Technology, Gandhinagar \and \url{https://sites.google.com/iitgn.ac.in/dharathakkar}}{thakkar_dhara@iitgn.ac.in}{https://orcid.org/0000-0002-4234-0105}{Supported by CSIR-UGC NET JRF Fellowship.}
\authorrunning{N. Misra, H. Mittal, S. Saurabh, and D. Thakkar} 
\keywords{Graph Modification, Rank Reduction, Eigenvalues} 
\begin{document}

\maketitle

\begin{abstract}
    For any fixed positive integer $r$ and a given budget $k$, the $r$-\textsc{Eigenvalue Vertex Deletion} ($r$-EVD) problem asks if a graph $G$ admits a subset $S$ of at most $k$ vertices such that the adjacency matrix of $G\setminus S$ has at most $r$ distinct eigenvalues. The edge deletion, edge addition, and edge editing variants are defined analogously. For $r = 1$, $r$-EVD is equivalent to the Vertex Cover problem. For $r = 2$, it turns out that $r$-EVD amounts to removing a subset $S$ of at most $k$ vertices so that $G\setminus S$ is a cluster graph where all connected components have the same size.

    We show that $r$-EVD is \textsf{NP}-complete even on bipartite graphs with maximum degree four for every fixed $r > 2$, and \textsf{FPT} when parameterized by the solution size and the maximum degree of the graph.

    We also establish several results for the special case when $r = 2$. For the vertex deletion variant, we show that $2$-EVD is \textsf{NP}-complete even on triangle-free and $3d$-regular graphs for any $d\geq 2$, and also \textsf{NP}-complete on $d$-regular graphs for any $d\geq 8$. The edge deletion, addition, and editing variants are all \textsf{NP}-complete for $r = 2$. The edge deletion problem admits a polynomial time algorithm if the input is a cluster graph, while --- in contrast --- the edge addition variant is hard even when the input is a cluster graph. We show that the edge addition variant has a quadratic kernel. The edge deletion and vertex deletion variants admit a single-exponential \textsf{FPT} algorithm when parameterized by the solution size alone.
    
    Our main contribution is to develop the complexity landscape for the problem of modifying a graph with the aim of reducing the number of distinct eigenvalues in the spectrum of its adjacency matrix. It turns out that this captures, apart from Vertex Cover, also a natural variation of the problem of modifying to a cluster graph as a special case, which we believe may be of independent interest.    
\end{abstract}

\newpage

\section{Introduction}

Graph modification problems are a fundamental class of optimization problems where we have a class of graphs $\mathcal F$ that satisfy some property of interest $P$, the input is a graph $G$, and we are interested in a smallest subset of vertices $S \subseteq V(G)$ such that $G \setminus S \in \mathcal{F}$. This is a rather general framework that captures several classical optimization problems as special cases, for instance:

\begin{itemize}
    \item when $\mathcal{F}$ is the collection of edgeless graphs, then the problem is \textsc{Vertex Cover};
    \item when $\mathcal{F}$ is the collection of acyclic graphs, then the problem is \textsc{Feedback Vertex Set};
    \item when $\mathcal{F}$ is the collection of bipartite graphs, then the problem is \textsc{Odd Cycle Traversal};
\end{itemize}

and so on. It has also been of interest to study modifications other than vertex deletion: the most common alternate modifications considered include edge deletion, edge addition, and edge editing (adding and removing edges). The optimization problems for these operations may be posed analogously.

\citeauthor{MMS2016} pose the question of modifying a graph with the goal of reducing the rank of the associated adjacency matrix, which is to say that $\mathcal{F}_{\leqslant r}$ is the class of graphs whose adjacency matrices have rank at most $r$. We use $A_G$ to denote the adjacency matrix of a graph $G$, and we use the phrase ``spectrum of $G$'' to refer to the (multi-)set of eigenvalues of $A_G$. Previous works focus separately on the settings of undirected~\citep{MMS2016} and directed~\citep{MS2018} graphs.

In the setting of simple undirected graphs, \citeauthor{MMS2016} introduce and study the \textsc{$r$-Rank Vertex Deletion}, \textsc{$r$-Rank Edge Deletion}, and \textsc{$r$-Rank Editing} problems. These problems generalize the classical \textsc{Vertex Cover} problem. They show that all the three problems are \textsf{NP}-complete, and are fixed parameter tractable (\textsf{FPT}) in the standard parameter: in particular, they demonstrate an algorithm with running time $2^{\mathcal{O}(k \log r)} n^{\mathcal{O}(1)}$ for $r$-\textsc{Rank Vertex Deletion}, and an algorithm for $r$-\textsc{Rank Edge Deletion} and $r$-\textsc{Rank Editing} running in time $2^{\mathcal{O}(f(r) \sqrt{k} \log k)} n^{\mathcal{O}(1)}$, where $k$ is the size of the solution sought. The authors also leave the following question open:

\emph{``[$\ldots$] what is complexity of the problem of reducing the number of distinct eigenvalues of a graph by deleting a few vertices or editing a few edges?''}

In this paper, we address this question at length, developing an initial picture of the complexity landscape for what we call the $r$-\textsc{Eigenvalue Vertex Deletion} ($r$-EVD), $r$-\textsc{Eigenvalue Edge Deletion} ($r$-EED), $r$-\textsc{Eigenvalue Edge Addition} ($r$-EEA), and $r$-\textsc{Eigenvalue Edge Editing} ($r$-EEE) problems. All these problems are defined for an arbitrary but fixed positive integer $r$. 


The problem definitions are the following, where we are given an undirected graph $G$ and a positive integer $k$ as input in all cases:

\begin{itemize}
\item $\mathbf{r}$\textbf{-EVD.} Is there a set $S\subseteq V(G)$ of size $\leq k$ such that the number of distinct eigenvalues of $A_{G \setminus S}$ is at most $r$?

\item $\mathbf{r}$\textbf{-EEE.} Is there a set $F \subseteq \binom{V(G)}{2}$ of size $\leqslant k$ such that the number of distinct eigenvalues of $A_{H}$ is at most $r$, where $H:=(V(G),E(G) \Delta F)$?

\item $\mathbf{r}$\textbf{-EEA.} Is there a set $F \subseteq {V(G) \choose 2} \setminus E(G)$ of size $\leqslant k$ such that the number of distinct eigenvalues of $A_{H}$ is at most $r$, where $H:=(V(G),E(G) \cup F)$?

\item $\mathbf{r}$\textbf{-EED.} Is there a set $F \subseteq E(G)$ of size $\leqslant k$ such that the number of distinct eigenvalues of $A_{H}$ is at most $r$, where $H:=(V(G),E(G) \setminus F)$?
\end{itemize}

Note that if we have a solution $S$ for the $r$-\textsc{Rank Vertex Deletion} problem, then $S$ is also a solution for the $(r+1)$-\textsc{Eigenvalue Vertex Deletion} problem; and analogous statements hold for the other modification problems. This is because the $r$-\textsc{Rank Vertex Deletion} problem can be equivalently stated as follows: given a graph $G$ and a positive integer $k$, find a smallest subset of vertices $S \subseteq V(G)$ such that $G \setminus S$ has at most $r$ non-zero eigenvalues. However, the converse is not true (since, in general, bounding the number of distinct eigenvalues is not sufficient to bound the rank), making the eigenvalue deletion problems distinct from their rank deletion counterpart.

\bgroup
\def\arraystretch{1.25}
\begin{table}[t]
  \centering
  \resizebox{\textwidth}{!}{%
  \begin{tabular}{lll}
  \hline
                                      & $r = 2$                                                                                                     & Fixed $r \geqslant 3$                                                                                     \\ \hline
  \multirow{3}{*}{Vertex Deletion~~~} & \begin{tabular}[c]{@{}l@{}}\textsf{NP}-complete for d-regular graphs~$\dagger$\\ (\Cref{2EVD hard})\end{tabular}       & \begin{tabular}[c]{@{}l@{}}\textsf{NP}-complete even on bipartite graphs\\ (\Cref{rEVD hard})\end{tabular}                \\ \cline{2-3} 
                                      & \begin{tabular}[c]{@{}l@{}}\textsf{FPT} in $k$ (\Cref{2EVD fpt})\end{tabular}                                      & \multirow{2}{*}{\begin{tabular}[c]{@{}l@{}}\textsf{FPT} in $k$ and $\Delta(G)$\\ (\Cref{rEVD FPT})\end{tabular}}   \\ \cline{2-2}
                                      & \begin{tabular}[c]{@{}l@{}}Polynomial time on forests\\ (\Cref{2EVD trees})\end{tabular}            &                                                                                                           \\ \hline
  \multirow{2}{*}{Edge Addition}      & \begin{tabular}[c]{@{}l@{}}\textsf{NP}-complete even on cluster graphs\\ (\Cref{2EEA NPhard})\end{tabular}   & \multirow{2}{*}{NP-complete~ (\Cref{rEEA NP hardness})}                                                                                     \\ \cline{2-2}
                                      & \begin{tabular}[c]{@{}l@{}}Quadratic kernel in $k$ (\Cref{2EEA kernel})\end{tabular}                      &                                                                                                           \\ \hline
  \multirow{2}{*}{Edge Deletion}      & \begin{tabular}[c]{@{}l@{}}\textsf{NP-complete}  (\Cref{2EEE NP hard})\end{tabular}                                      & \multirow{2}{*}{\begin{tabular}[c]{@{}l@{}}\textsf{NP}-complete\\ (\Cref{rEED NP hard})\end{tabular}} \\ \cline{2-2}
                                      & \begin{tabular}[c]{@{}l@{}}\textsf{FPT} in $k$ (\Cref{2EED fpt})\end{tabular}
                                      \\ \cline{2-2}
                                      & \begin{tabular}[c]{@{}l@{}}Polynomial time on triangle-free graphs\\ (\Cref{EED trianglefree})\end{tabular}
                                      &                                                                                                           \\ \hline
  Edge Editing                        & \begin{tabular}[c]{@{}l@{}}\textsf{NP}-complete (\Cref{2EEE NP hard})\end{tabular}                                 &  OPEN                                                                                                         \\ ~\\
  \end{tabular}%
  }
  \caption{A summary of our results. The result marked $\dagger$ holds for all $d$ except for $d = 1,2,3,4,5,7$. Some polynomial cases are omitted from this summary.
  \vspace{-20pt}}
  \label{table:resultsummary}
  \end{table}
\egroup

\textbf{Our Contributions.} We summarize our contributions below, and also in~\Cref{table:resultsummary}. We first focus on the special case when $r = 2$. It is known that the adjacency matrix $A_{G}$ of a graph $G$ has at most two distinct eigenvalues if and only if $G$ is a disjoint union of equal-sized cliques (\Cref{2eval}). Based on this, note that the $2$-\textsc{Eigenvalue Vertex Deletion} problem is equivalent to finding a subset $S \subseteq V(G)$ of vertices such that $G \setminus S$ is a disjoint union of cliques of size $\ell$ for some $1 \leq \ell \leq |V(G)|$. Note that this is closely related to the \textsc{Cluster Vertex Deletion} problem, which is a well-studied question that involves removing a smallest subset of vertices to obtain a cluster graph. However, to the best of our knowledge, the variant where we demand that the clusters have the same size has not been studied. Our results about the ``uniform'' version of \textsc{Cluster Vertex Deletion} may therefore be of independent interest. 

Our main contributions in the context of vertex deletion are the following results:

\begin{itemize}
    \item We show that $2$-EVD is \textsf{NP}-complete on $d$-regular graphs for all $d$ except for $d = 1,2,3,4,5,7$ (\Cref{2EVD hard}).
    \item We also give a single-exponential \textsf{FPT} algorithm in the standard parameter (\Cref{2EVD fpt}), and show that the problem can be solved in polynomial time on forests and $d$-regular graphs for $d \leqslant 2$ (\Cref{2EVD trees}).
    \item Further, for any fixed $r \geqslant 3$, we show that $r$-EVD is \textsf{NP}-complete on bipartite graphs (\Cref{rEVD hard}) and is \textsf{FPT} in the standard parameter combined with the maximum degree of the graph (\Cref{rEVD FPT}). 
\end{itemize}



We now describe our findings for the edge modification variants. 
\begin{itemize}
    \item We show that $2$-EEA is already \textsf{NP}-complete when the input is either a cluster graph, a forest, or a collection of cycles (\Cref{2EEA NPhard}). \item We demonstrate that the problem has a quadratic kernel in the standard parameter (\Cref{2EEA kernel}).
    \item We show that $r$-\textsc{EEA} is NP-complete for any fixed $r\geq 3$ (Theorem~\ref{rEEA NP hardness}).
    \item For the edge deletion variant, we show that $r$-EED is \textsf{NP}-complete for any fixed $r \geqslant 2$ (\Cref{2EEE NP hard,rEED NP hard}). 
    \item For $2$-EED, we have a single-exponential \textsf{FPT} algorithm (\Cref{2EED fpt}) in the standard parameter and a polynomial time algorithm on triangle-free graphs (\Cref{EED trianglefree}). 
    \item Finally, for the edge editing variant, we show that $2$-\textsc{Eigenvalue Edge Editing} is \textsf{NP}-complete (\Cref{2EEE NP hard}). 
\end{itemize}

\textbf{Related Work.} As we noted previously, the special case when $r=2$ is closely related to the problem of modifying to a cluster graph, in which we are allowed to modify the graph such that the resulting graph is cluster i.e., it is disjoint union of cliques. Depending on the modifications allowed, these problems are variously refered to as \textsc{Cluster Vertex Deletion}, \textsc{Cluster Edge Deletion}, \textsc{Cluster Edge Addition} and \textsc{Cluster Edge Editing}. Further, \citeauthor{CVD-Modification} have studied a variant of cluster vertex deletion where they additionally demand that the cluster graph obtained after the modification has at most $p$ components.


Problems related to modifying to a cluster graph are very well-studied because they model the clustering problem in various ways. In a clustering problem we are given various data points with some notion of distance between these points, and it is of interest to group these points into ``clusters'', where each cluster consists of points that are mutually close with respect to the given distance metric. These scenarios can often be modeled with graphs, and in fact graph structure can often be used to model additional constraints of interest. Given the fundamental importance of clustering, it is no surprise that modifying to cluster graphs has attracted substantial interest in the literature of graph algorithms. We refer the reader to~\cite{CVD-master-thesis} for an overview of results related to cluster modification problems.

Another related problem is the problem of deleting to a graph where the connected components have small diameter. This is known as the $s$-\textsc{Club Cluster Vertex Deletion} problem \citep{sClubVD}. Here, we are given a graph $G$ and two integers $s \geq 2$ and $k \geq 1$; and the question is if it is possible to remove at most $k$ vertices from $G$ such that each connected component of the resulting graph has diameter at most $s$. Note that this naturally generalizes the problem of modifying to cluster graphs: indeed, the problem is equivalent to \textsc{Cluster Vertex Deletion} for $s = 1$. The edge modification variants have also been considered and are well-studied. 

We note that a solution to the $r$-\textsc{Eigenvalue Vertex Deletion}  problem will also be a valid solution to the $(r-1)$-\textsc{Club Cluster Vertex Deletion} due to~\Cref{diameter bound}, which states that graphs of diameter $d$ have at least $(d+1)$ distinct eigenvalues. This is analogously true for the other modification problems as well. On the other hand, it is easy to see that the converse is not necessarily true. 

Throughout, we use the $\mathcal{O}^\star(\cdot)$ notation to suppress polynomial factors. 

Sections 3,4,5, and 6 focus respectively on the problems of $r$-EVD, $r$-EEA, $r$-EED, and $r$-EEE.



\section{Preliminaries}
Let $G=(V,E)$ be a graph, where $V$ and $E$ denote the vertex set and the edge set of $G$ respectively. We typically use $n$ and $m$ to denote $|V|$ and $|E|$ respectively. Throughout this paper, we focus on simple and undirected graphs. The adjacency matrix  $A_{G}=a_{ij}$ of a graph $G$ is an $n\times n$ matrix with $a_{ij}\in \{0,1\}$ the entry $(i.j)=1$ if the pair $(i,j)$ is an edge in $G$. The \emph{spectrum} of $G$ is the multi-set of eigenvalues of $A_{G}$. We note that spectrum can be computed in polynomial time based on results from \cite{eigenvalueComputationIso} and \cite{eigenvalueComputationSymMat}. Notice that for simple undirected graphs $G$, $A_{G}$ is symmetric matrix with zero on the diagonals.

A \emph{principal submatrix} of a square matrix $A$ is a matrix obtained by removing an equal number of rows and columns from $A$ such that the indices of the removed rows match with the indices of the removed columns.

The following known results will be relevant to our discussions:

\begin{lemma}
[\cite{goldberg2014split,TwoDistinctEV}]
\label{2eval} Let $G$ be a graph. Then, its adjacency matrix $A_{G}$ has at most two distinct eigenvalues if and only if $G$ is a disjoint union of equal-sized cliques. 
\end{lemma}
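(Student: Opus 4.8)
The plan is to prove both directions directly, with the forward implication being a short spectral computation and the reverse implication hinging on the minimal polynomial of $A_G$.

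For the \emph{if} direction, suppose $G$ is the disjoint union of $t$ cliques, each on $s$ vertices. Then $A_G$ is block-diagonal with $t$ identical blocks $J_s - I_s$, where $J_s$ is the all-ones matrix. Since $J_s$ has eigenvalues $s$ (once) and $0$ (with multiplicity $s-1$), each block contributes eigenvalues $s-1$ and $-1$; hence $A_G$ has spectrum $\{s-1,-1\}$ when $s \geq 2$, and the single eigenvalue $0$ when $s = 1$. Either way there are at most two distinct eigenvalues.

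For the \emph{only if} direction, I would first dispose of the degenerate case: if $A_G$ has a single eigenvalue $\lambda$, then since $A_G$ is symmetric it equals $\lambda I$, and the zero diagonal forces $\lambda = 0$, so $G$ is edgeless, i.e.\ a union of equal-sized ($K_1$) cliques. Now suppose $A_G$ has exactly two distinct eigenvalues $\lambda_1 \neq \lambda_2$. Because $A_G$ is real symmetric it is diagonalizable, so its minimal polynomial is squarefree; having exactly the roots $\lambda_1, \lambda_2$, it must equal $(x-\lambda_1)(x-\lambda_2)$, giving the quadratic identity
\begin{equation}
A_G^2 = (\lambda_1 + \lambda_2)\, A_G - \lambda_1 \lambda_2\, I.
\end{equation}
I would then read this off entrywise. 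The diagonal entry $(A_G^2)_{ii}$ equals $\deg(i)$, while the right-hand side is $-\lambda_1\lambda_2$; hence every vertex has the same degree $d := -\lambda_1\lambda_2$, so $G$ is $d$-regular. For $i \neq j$, the entry $(A_G^2)_{ij}$ counts the common neighbors of $i$ and $j$, while the right-hand side equals $(\lambda_1+\lambda_2)(A_G)_{ij}$, which is $0$ whenever $i$ and $j$ are non-adjacent. Thus non-adjacent vertices have no common neighbor. A shortest-path argument then shows each connected component is a clique: if a component contained two non-adjacent vertices, it would contain a pair at distance exactly two, contradicting the absence of common neighbors. Finally, $d$-regularity forces every clique to have exactly $d+1$ vertices, so all components share the same size and $G$ is a disjoint union of equal-sized cliques.

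The main obstacle is the reverse direction, specifically the passage from the purely spectral hypothesis to combinatorial structure. The crucial step is recognizing that the two-eigenvalue condition is equivalent to the quadratic matrix identity above, which relies on the fact that a symmetric (hence diagonalizable) matrix has a squarefree minimal polynomial. Once that identity is available, extracting regularity and the ``no common neighbors'' property is routine, and the uniform clique decomposition follows easily.
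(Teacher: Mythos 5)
The paper does not prove this lemma at all --- it is imported as a known result from the cited references --- so there is no in-paper argument to compare against. Your self-contained proof is correct and is the standard one: the forward direction by computing the spectrum of $J_s - I_s$, and the reverse direction via the squarefree minimal polynomial of a symmetric matrix, yielding $A_G^2 = (\lambda_1+\lambda_2)A_G - \lambda_1\lambda_2 I$, from which regularity and the ``non-adjacent vertices have no common neighbour'' property follow entrywise; the distance-two argument and the degree count then give equal-sized cliques. All steps, including the one-eigenvalue degenerate case, check out.
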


\begin{lemma}
[\cite{brouwer2011spectra}, Proposition 1.3.3] 
\label{diameter bound}
Let $G$ be a connected graph with diameter $d$. Then, its adjacency matrix $A_G$ has at least $d+1$ distinct eigenvalues.
\end{lemma}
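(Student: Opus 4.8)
The plan is to relate the number of distinct eigenvalues of $A_G$ to the degree of its minimal polynomial, and then exhibit a large set of linearly independent powers of $A_G$ by exploiting the fact that short walks cannot connect far-apart vertices. Since $A_G$ is a real symmetric matrix, it is diagonalizable with real eigenvalues, so its minimal polynomial is exactly $\prod_\lambda (x - \lambda)$, the product taken over the \emph{distinct} eigenvalues $\lambda$. In particular, the number of distinct eigenvalues of $A_G$ equals the degree of its minimal polynomial. Hence it suffices to show that this degree is at least $d+1$, equivalently that the matrices $I, A_G, A_G^2, \ldots, A_G^d$ are linearly independent.

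To establish this independence, I would invoke the standard combinatorial interpretation of matrix powers: the $(x,y)$ entry of $A_G^k$ counts the walks of length exactly $k$ from $x$ to $y$. Fix two vertices $u$ and $v$ with $\mathrm{dist}(u,v)=d$ (these exist because $G$ has diameter $d$), and choose a shortest path $u = w_0, w_1, \ldots, w_d = v$, so that $\mathrm{dist}(u, w_i) = i$ for each $i$. The crucial observation is that the $(u, w_i)$ entry of $A_G^k$ is zero whenever $k < i$, since no walk can be shorter than the distance, and is strictly positive when $k = i$, since the initial segment $w_0, \ldots, w_i$ of the shortest path is such a walk.

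Now suppose $\sum_{k=0}^{d} c_k A_G^k = 0$. The main step is to read off the coefficients one at a time by inspecting entries in order of decreasing distance. Examining the $(u, w_d)$ entry annihilates every term except $k=d$ and forces $c_d = 0$; examining the $(u, w_{d-1})$ entry then forces $c_{d-1}=0$; proceeding inductively down the path yields $c_i = 0$ for all $i$. This triangular elimination shows that $I, A_G, \ldots, A_G^d$ are linearly independent, which completes the argument.

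I expect the only genuine subtlety to lie in the first reduction, namely justifying that the number of distinct eigenvalues coincides with the degree of the minimal polynomial; this hinges specifically on the symmetry, and hence diagonalizability, of $A_G$. The combinatorial linear-independence argument is then routine given the walk-counting interpretation together with the vanishing pattern of the $(u, w_i)$ entries along a diametral shortest path.
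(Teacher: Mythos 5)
Your proof is correct and complete. The paper does not prove this lemma itself --- it cites it as Proposition 1.3.3 of Brouwer and Haemers --- and your argument (the number of distinct eigenvalues of the symmetric, hence diagonalizable, matrix $A_G$ equals the degree of its minimal polynomial, combined with the linear independence of $I, A_G, \ldots, A_G^d$ obtained by walk counting along a diametral shortest path) is precisely the standard proof given in that reference.
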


\begin{lemma}
[\cite{brouwer2011spectra}, Corollary 2.5.2]
\label{Cauchy interlacing}\textbf{Cauchy interlacing}.\\Let $A$ be a symmetric matrix of size $n\times n$. Let $B$ be a principal submatrix of $A$ of size $(n-1)\times (n-1)$. Then, the eigenvalues of $B$ interlace the eigenvalues of $A$. That is,
\[\mu_1\geq \sigma_1\geq \mu_2\geq \sigma_2\geq \mu_3\geq \ldots \ldots\ldots\ldots\geq \mu_{n-2}\geq\sigma_{n-2}\geq\mu_{n-1}\geq \sigma_{n-1}\geq \mu_{n}\]
where, $\mu_1\geq \mu_2\geq\ldots\ldots\geq\mu_{n}$ denote the $n$ eigenvalues of $A$, and $\sigma_1\geq \sigma_{2}\geq \ldots\ldots \geq \sigma_{n-1}$ denote the $n-1$ eigenvalues of $B$.
\end{lemma}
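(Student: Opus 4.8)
The plan is to prove the interlacing inequalities through the variational (Courant--Fischer) characterization of the eigenvalues of a real symmetric matrix. Write the eigenvalues of $A$ in decreasing order as $\mu_1 \geq \cdots \geq \mu_n$ (these are real by the spectral theorem, since $A_G$ is symmetric with real entries). Courant--Fischer supplies, for each $k$, both a max--min and a dual min--max formula,
\[
\mu_k = \max_{\dim U = k} \; \min_{0 \neq x \in U} \frac{x^\top A x}{x^\top x}
= \min_{\dim U = n-k+1} \; \max_{0 \neq x \in U} \frac{x^\top A x}{x^\top x},
\]
where $U$ ranges over subspaces of $\mathbb{R}^n$ of the indicated dimension, and the analogous pair of identities holds for $B$, whose eigenvalues I write as $\sigma_1 \geq \cdots \geq \sigma_{n-1}$.

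The key structural observation is that a principal submatrix is simply the restriction of the quadratic form to a coordinate subspace. Since $B$ is obtained by deleting one row together with the matching column, say index $n$, set $W = \{x \in \mathbb{R}^n : x_n = 0\}$, a subspace of dimension $n-1$ that I identify with $\mathbb{R}^{n-1}$. For every $x \in W$ one has $x^\top A x = y^\top B y$, where $y$ collects the surviving $n-1$ coordinates. Hence the Rayleigh quotients of $B$ are exactly the Rayleigh quotients of $A$ evaluated on vectors constrained to lie in $W$, and every subspace of $W$ of a given dimension is in particular a subspace of $\mathbb{R}^n$ of that dimension.

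First I would establish $\mu_k \geq \sigma_k$ using the max--min form: $\sigma_k$ is the maximum of the minimal Rayleigh quotient taken over all $k$-dimensional subspaces of $W$, a subfamily of the $k$-dimensional subspaces of $\mathbb{R}^n$ appearing in the formula for $\mu_k$, so maximizing over the smaller family can only decrease the value, giving $\sigma_k \leq \mu_k$. Symmetrically I would establish $\sigma_k \geq \mu_{k+1}$ using the dual min--max form: $\sigma_k$ is the minimum of the maximal Rayleigh quotient over all $(n-k)$-dimensional subspaces of $W$, again a subfamily of the $(n-k)$-dimensional subspaces of $\mathbb{R}^n$ over which $\mu_{k+1}$ is minimized, so minimizing over the smaller family can only increase the value, giving $\sigma_k \geq \mu_{k+1}$. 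Chaining these two bounds for $k = 1, \ldots, n-1$ yields the stated chain $\mu_1 \geq \sigma_1 \geq \mu_2 \geq \cdots \geq \sigma_{n-1} \geq \mu_n$.

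The main obstacle is getting the two directions of the inequality right, since the subtlety is entirely about how restricting to a subfamily of subspaces behaves: it pushes a maximum down and a minimum up, so one must pair the primal max--min formula with the bound $\mu_k \geq \sigma_k$ and the dual min--max formula with the bound $\sigma_k \geq \mu_{k+1}$; using the same formula for both would fail. Everything else reduces to the elementary identity that the quadratic form of a principal submatrix is the ambient quadratic form restricted to a coordinate subspace, and --- if one does not wish to quote Courant--Fischer as a black box --- to the standard spectral-theorem derivation of that min--max identity for symmetric matrices.
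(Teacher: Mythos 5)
Your proof is correct. Note, however, that the paper does not prove this statement at all: it is imported as a known result, cited to Brouwer and Haemers (Corollary 2.5.2), so there is no in-paper argument to compare against. What you have written is the standard self-contained derivation via the Courant--Fischer variational characterization, and the details check out: the dimension counts are right (for the $(n-1)\times(n-1)$ matrix $B$, the dual formula for $\sigma_k$ ranges over $(n-k)$-dimensional subspaces of $W$, matching the subspace dimension in the dual formula for $\mu_{k+1}$), and you correctly pair the max--min form with $\mu_k \geq \sigma_k$ and the min--max form with $\sigma_k \geq \mu_{k+1}$, which is exactly the point where such arguments usually go wrong. One tiny generality remark: you take the deleted index to be $n$ for notational convenience, which is harmless since a simultaneous permutation of rows and columns is an orthogonal conjugation and preserves the spectrum, but it is worth saying so explicitly if you want the argument to cover an arbitrary principal submatrix as in the statement.
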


\begin{lemma}
[\cite{brouwer2011spectra}, Chapter 3, Exercise 1]
\label{smallest ev -1} Let $G$ be a graph with smallest eigenvalue $-1$. Then, $G$ is a disjoint union of cliques.
\end{lemma}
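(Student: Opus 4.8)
The plan is to argue through the forbidden-induced-subgraph characterisation of cluster graphs: a graph is a disjoint union of cliques if and only if it contains no induced $P_3$ (the path on three vertices). Indeed, within any connected component the absence of an induced $P_3$ forces adjacency to be transitive --- if $a \sim b$ and $b \sim c$ with $a \neq c$, then $a \sim c$, for otherwise the triple $a$-$b$-$c$ would induce a $P_3$ --- so every component is a clique. Hence it suffices to establish the following spectral contrapositive: \emph{if $G$ contains an induced $P_3$, then the smallest eigenvalue of $A_G$ is strictly less than $-1$.} Since an edgeless graph has smallest eigenvalue $0 \neq -1$, this will show that any $G$ whose smallest eigenvalue equals $-1$ is $P_3$-free and therefore a disjoint union of cliques.

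The key computation is that the adjacency matrix of $P_3$ has eigenvalues $\sqrt{2}, 0, -\sqrt{2}$, so its least eigenvalue is $-\sqrt{2}$. The inequality that does the real work is $-\sqrt{2} < -1$: it is this strictness that will ultimately contradict $\lambda_{\min}(A_G) = -1$. To transfer the bound from $P_3$ to $G$, observe that if $H$ is an induced subgraph of $G$ on $m$ vertices, then $A_H$ is a principal submatrix of $A_G$ obtained by deleting the $n-m$ rows and columns indexed by $V(G)\setminus V(H)$. Deleting these one at a time and applying Cauchy interlacing (\Cref{Cauchy interlacing}) at each step, the relation $\mu_n \leq \sigma_{n-1}$ gives that the least eigenvalue cannot increase when we pass from the smaller matrix back to the larger one; iterating yields $\lambda_{\min}(A_G) \leq \lambda_{\min}(A_H)$. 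Taking $H$ to be an induced $P_3$ gives $\lambda_{\min}(A_G) \leq -\sqrt{2} < -1$, which completes the contrapositive and hence the lemma.

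I do not anticipate a genuine obstacle here; the argument is short once the two ingredients --- the $P_3$-free characterisation of cluster graphs and the interlacing monotonicity of the least eigenvalue --- are in place. The only points requiring mild care are the bookkeeping when iterating \Cref{Cauchy interlacing} from a single-vertex deletion up to the deletion of an arbitrary vertex subset, and being sure to invoke the strict bound $-\sqrt{2} < -1$ rather than merely $\leq -1$, since it is precisely the strictness that produces the contradiction with $\lambda_{\min}(A_G) = -1$.
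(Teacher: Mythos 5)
Your argument is correct. The paper does not prove this lemma at all --- it is cited as an exercise from Brouwer and Haemers --- so there is no in-paper proof to compare against; your route (the induced-$P_3$ characterisation of disjoint unions of cliques, combined with iterating \Cref{Cauchy interlacing} one vertex at a time to get $\lambda_{\min}(A_G)\leq\lambda_{\min}(A_{P_3})=-\sqrt{2}<-1$ whenever $G$ contains an induced $P_3$) is the standard solution to that exercise and is sound.
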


Some examples of graph classes whose spectrum is well-known include complete graphs, paths and cycles (\cite{brouwer2011spectra}, Chapter 1). A complete graph on $n$ vertices has eigenvalues $-1$ and $n-1$ (with multiplicities $n-1$ and $1$ respectively). A path on $n$ vertices has eigenvalues $2\cos\big(\frac{\pi j}{n+1}\big)\big\lvert_{1\leq j\leq n}$. A cycle on $n$ vertices has eigenvalues $2\cos\big(\frac{2\pi j}{n}\big)\big\lvert_{0\leq j\leq n-1}$.

We refer the reader to~\cite{diestel} for background on graph theory and as a reference for standard graph-theoretic notation, and~\cite{pcbook} for background on terminology related to parameterized algorithms. In particular, the machinery we use for our branching algorithms is based on the ideas described in~\cite[Chapter 3][]{pcbook}.

In the following, let $G=\left(V,E\right)$ and $G'=\left(V',E'\right)$ be graphs, and $U \subseteq V$ some subset of vertices of $G$. Let $G'$ be a subgraph of $G$. 
If $E'$ contains all the edges $\left\{ u,v\right\} \in E$ with $u,v\in V'$, then $G'$ is an \emph{induced subgraph} of $G$, \emph{induced by} $V'$, denoted by $G[V']$.  
For any $U\subseteq V$, $G\setminus U=G[V\setminus U]$.

Two vertices $x, y$ of $G$ are adjacent, or neighbors, if $\{x,y\}$ is an edge of $G$. The degree of a vertex $v$, denoted $d(v)$, is the number of nieghbors it has. A graph is $d$-regular if every vertex has degree $d$. If all the vertices of $G$ are pairwise adjacent, then $G$ is complete. For $v\in V$, $N_G(v)=\{u\;|\; (u,v)\in E\}$: this set collects the neighbors of $v$ and is called the \emph{open neighborhood} of $v$, and we may drop the subscript $G$ if the graph is clear from the context. For $v\in V$, $N_G[v]= N_G(v) \cup \{v\}$: this set is the \emph{closed neighborhood} of $v$.

Pairwise non-adjacent vertices or edges are called independent. More formally, a set of vertices or of edges is independent (or stable) if no two of its elements are adjacent. If $S \subseteq V(G)$ is an independent set, then the component of $S$ in $G$ is called a vertex cover.

A path is a non-empty graph $P=(V, E)$ of the form $$
V=\left\{x_0, x_1, \ldots, x_k\right\} \quad E=\left\{x_0 x_1, x_1 x_2, \ldots, x_{k-1} x_k\right\},$$ where the $x_i$ are all distinct. The vertices $x_0$ and $x_k$ are linked by $P$ and are called its ends; the vertices $x_1, \ldots, x_{k-1}$ are the inner vertices of $P$. The number of edges of a path is its length, and the path of length $k$ is denoted by $P^k$. If $P=x_0 \ldots x_{k-1}$ is a path and $k \geqslant 3$, then the graph $C:=$ $P+x_{k-1} x_0$ is called a cycle. The distance $d_G(x, y)$ in $G$ of two vertices $x, y$ is the length of a shortest $x-y$ path in $G$; if no such path exists, we set $d(x, y):=\infty$. The greatest distance between any two vertices in $G$ is the \emph{diameter} of $G$, denoted by $\operatorname{d}(G)$. A non-empty graph $G$ is called \emph{connected} if any two of its vertices are linked by a path in $G$. If $U \subseteq V(G)$ and $G[U]$ is connected, we also call $U$ itself connected (in $G$ ). A maximal connected subgraph of $G$ is a connected component of $G$. A graph $G$ is called \emph{cluster} graph if every connected component induces a clique. A graph is a \emph{forest} if every connected component does not contain a cycle as a subgraph.

\section{Reducing eigenvalues by deleting vertices}
In this section, we show that the $r$-EVD problem is \textsf{NP}-complete for $r \geq 1$. Recall that for $r = 1$, $r$-EVD is equivalent to \textsc{Vertex Cover}. For $r = 2$, we show that the problem is \textsf{NP}-complete on general graphs, admits a single-exponential \textsf{FPT} algorithm in the standard parameter, and is polynomial-time solvable on trees. For any fixed $r \geqslant 3$, we show that the problem is \textsf{NP}-complete on bipartite graphs and is \textsf{FPT} in the standard parameter combined with the maximum degree of the graph. 

 

\subsection{Deleting to Two Distinct Eigenvalues}

Note that by~\Cref{2eval}, $2$-EVD is equivalent to \textsc{Uniform Cluster Vertex Deletion}, a problem where the input is a graph $G$ and a positive integer $k$ and the question is if there is a subset $S \subseteq V(G)$ of vertices such that $G \setminus S$ is a disjoint union of $\ell$-sized cliques for some $1 \leq \ell \leq |V(G)|$. Note that $\ell$ is not a part of the input. We begin by showing that the problem is hard even when restricted to $d$-regular graphs for any $d$ other than $1,2,3,4,5,7$.

\begin{restatable}{theorem}{twoEVDhard}
  \label{2EVD hard} 
    $2$-\textsc{Eigenvalue Vertex Deletion} is \textsf{NP}-complete even on triangle-free and $3d$-regular graphs for any $d\geq 2$, and \textsf{NP}-complete on $d$-regular graphs for any $d\geq 8$.
\end{restatable}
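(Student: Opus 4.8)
The plan is to first settle membership in \NP and then to give a reduction that simultaneously controls triangle-freeness and the degree. For membership, observe that by \Cref{2eval} a set $S$ is a valid solution precisely when $G\setminus S$ is a disjoint union of equal-sized cliques, and this can be checked in polynomial time, either by computing the spectrum (as noted in the preliminaries) or by directly inspecting the connected components of $G\setminus S$.

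The conceptual heart is the following structural reduction of the problem on triangle-free inputs. If $G$ is triangle-free then so is every induced subgraph $G\setminus S$, and hence the only cliques it can contain have size $1$ or $2$. Consequently $G\setminus S$ is a disjoint union of equal-sized cliques if and only if it is either edgeless (so that $S$ is a vertex cover of $G$) or $1$-regular (so that $V\setminus S$ induces a perfect matching, i.e.\ a maximal induced matching saturating every surviving vertex). Thus on triangle-free graphs $2$-EVD asks for the cheaper of two deletion targets: delete to an independent set, or delete to an induced perfect matching. I would reduce from \textsc{Maximum Induced Matching}, which is known to be \NPH on regular bipartite graphs; bipartiteness supplies triangle-freeness for free, and asking for an induced matching with $\mu$ edges is exactly asking to delete $N-2\mu$ vertices so that the survivors induce a $1$-regular graph.

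The difficulty is that $2$-EVD takes the minimum over the two targets, so the constructed instance must be arranged so that the induced-matching target is the binding one. A degree count makes this precise: in a $c$-regular graph on $N$ vertices, leaving a $1$-regular induced subgraph forces at least $\tfrac{c-1}{2c-1}N$ deletions (since deleting $s$ vertices removes at most $cs$ edges, yet at most $(N-s)/2$ edges may survive), whereas on a $c$-regular bipartite graph a minimum vertex cover has size exactly $N/2$ by K\H{o}nig's theorem. Hence the independent-set target never costs less than $N/2$, while the matching target costs $N-2\mu_{\max}$, and the latter binds precisely when $\mu_{\max} > N/4$. I would therefore pad the input by taking a disjoint union with $c$-regular bipartite gadgets that (i) contribute a controlled, large induced matching so the total induced matching number is pushed into the binding window $[N/4,N/2]$, and (ii) leave the original \textsc{Maximum Induced Matching} answer intact, since induced matchings across disjoint components simply add. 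Tuning the gadget sizes so that the decision threshold lands in this window while keeping every vertex at degree exactly $c=3d$ is the main obstacle and the place where the construction must be designed with care; this would yield \NPC-ness on triangle-free $3d$-regular graphs for every $d\ge 2$.

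For the remaining degrees, namely $d$-regular graphs with $d\ge 8$ where triangle-freeness is dropped, the analysis changes because $G\setminus S$ may now be a disjoint union of larger equal-sized cliques, so the clique size $\ell$ is no longer confined to $\{1,2\}$. I would adapt the previous construction, augmenting the gadgets so that each attainable degree $d\ge 8$ is realized, and additionally rule out (again via an edge/degree count against the budget) that a deletion to uniform $\ell$-cliques with $\ell\ge 3$ can beat the intended target. Reconciling these extra admissible clique sizes with the budget argument is the principal new wrinkle in this second regime, and the small degrees $d\in\{1,2,3,4,5,7\}$ fall outside the ranges that these two constructions cover.
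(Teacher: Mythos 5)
Your reduction of the triangle-free case to a choice between a vertex cover and an induced perfect matching is the right structural observation, but the plan leaves its hardest step unresolved, and your own degree count shows why it is genuinely delicate rather than a matter of routine tuning. In a $c$-regular graph, leaving a $1$-regular induced subgraph forces at least $\frac{c-1}{2c-1}N$ deletions, so the maximum induced matching satisfies $\mu_{\max}\leq \frac{c}{2(2c-1)}N=\frac{N}{4}\bigl(1+\frac{1}{2c-1}\bigr)$. Your ``binding window'' $\mu_{\max}>N/4$ is therefore an interval of relative width only $\frac{1}{2c-1}$: the reduction can only distinguish near-extremal induced matchings, and the padding gadgets must themselves be $3d$-regular, bipartite, and achieve an induced-matching ratio strictly above $1/4$ --- i.e.\ they must be within a $\frac{1}{2c}$ sliver of the extremal ratio --- or else adding them drags the combined ratio below $1/4$ and the vertex-cover target wins. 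No such gadgets are constructed, and you would additionally need \textsf{NP}-hardness of \textsc{Maximum Induced Matching} on $3d$-regular bipartite graphs for every $d\geq 2$ (the standard hardness is for cubic bipartite graphs), which itself requires an argument or a precise citation. You flag the gadget design as ``the main obstacle,'' which is accurate: it is the proof. The second regime ($d$-regular, $d\geq 8$, triangles allowed) is not argued at all --- ``augmenting the gadgets so that each attainable degree is realized'' and ``ruling out uniform $\ell$-cliques with $\ell\geq 3$'' are precisely the two things a proof must supply.

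The paper avoids the binding-window problem entirely by reducing from \textsc{Independent Set} rather than \textsc{Maximum Induced Matching}. For the triangle-free case it starts from cubic triangle-free graphs and blows up each vertex into $d$ independent copies, each edge into a complete bipartite graph between copy-classes; the result is $3d$-regular and triangle-free with budget $d(n-z)$, and the key point is that \emph{both} admissible surviving clique sizes ($1$ and $2$) yield an independent set of size $z$ in the source graph (the twin structure prevents two size-$2$ cliques from reusing copies of a single vertex), so neither target needs to be forced to ``bind.'' For $d\geq 8$ it reduces from \textsc{Independent Set} on \emph{planar} cubic triangle-free graphs, replaces each vertex by a clique on $d-2$ copies, and invokes the known $\frac{3n}{8}$ lower bound on the independence number of such graphs to force the surviving clique size to be at least $3$, after which two adjacent survivors would create a triangle the construction forbids. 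If you want to salvage your route, you would need to either exhibit the extremal regular bipartite gadgets explicitly or switch to a source problem whose solutions are automatically compatible with both surviving clique sizes, as the paper does.
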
 
  
To show this result we use two reductions: one from the \textsc{Independent Set} problem on cubic triangle-free graphs and the other from \textsc{Independent Set} on planar cubic triangle-free graphs. 

In the first construction, we replace every vertex $v$ with vertices $v^{(1)}$ and $v^{(2)}$, and extended the edges as follows: an edge $(u,v)$ maps to the edges $(u^{(1)},v^{(1)})$, $(u^{(1)},v^{(2)})$, $(u^{(2)},v^{(1)})$, and $(u^{(2)},v^{(2)})$. Note that this construction preserves triangle-freeness and transforms a cubic graph to a six-regular graph. For demonstrating hardness on $3d$ regular graphs for $d \geqslant 2$, we make $d$ copies of the vertices instead of two copies. 

For the second construction, we make six copies of the graph and for every vertex, we induce a clique on all its copies. This construction turns a cubic graph into a $8$-regular graph. For demonstrating hardness on $d$ regular graphs for $d \geqslant 8$, we make $(d-2)$ copies of the vertices instead of six. 

\begin{proof}$~$\\
\textbf{First reduction.}\\
Consider an instance, say $(G,z)$, of \textsc{Independent Set}, where $G$ is a cubic triangle-free graph, say on $n$ vertices. Construct a graph, say $H$, as follows: For each vertex $v\in V(G)$, introduce two copies of $v$, say $v^{(1)}$ and $v^{(2)}$. Also, for each edge $e\in E(G)$, say with endpoints $u$ and $v$, make the $i^{th}$ copy of $u$, i.e., $u^{(i)}$, adjacent to the $j^{th}$ copy of $v$, i.e., $v^{(j)}$, for all $1\leq i,j\leq 2$.
  
  \begin{center}
  \includegraphics[scale=0.65]{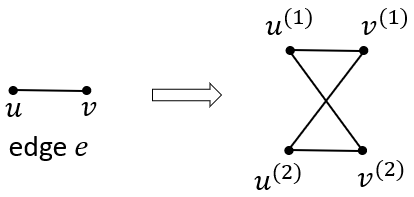}
  \end{center}

  That is, 
  $$V(H) = \big\{v^{(i)}~|~v\in V(G)~\mbox{and}~1\leq i\leq 2\big\}$$
  $$E(H) = \big\{\{u^{(i)},v^{(j)}\}~|~\{u,v\}\in E(G)~\mbox{and}~1\leq i,j \leq 2\big\}$$
  
 See \Cref{first reduction} for an example.
\begin{figure}[H]
\centering
\includegraphics[scale=0.67]{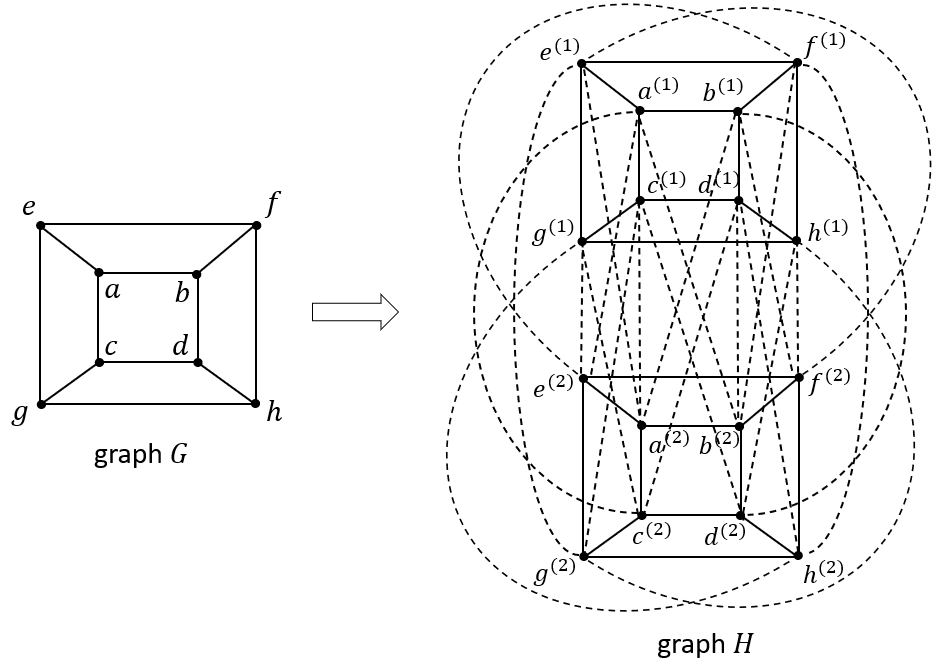}
\caption{An example illustrating the construction of $H$ from $G$ in the first reduction of \Cref{2EVD hard}.}
\label{first reduction}
\end{figure}

Note that for every vertex $v\in V(G)$, each of the two 
  copies of $v$, i.e., $v^{(1)}$ and $v^{(2)}$, has six neighbours in $H$, namely the two copies of each of the three neighbours of $v$ in $G$. So, $H$ is a $6$-regular graph. Also, as $G$ is a triangle-free graph, so is $H$. 
  
  Let us show that $G$ has an independent set of size $z$ if and only if $\big(H,2(n-z)\big)$ is a \textbf{\textsc{YES}} instance of $2$-\textsc{Eigenvalue Vertex Deletion}.
  
  \noindent ($\Rightarrow$) Suppose that $G$ has an independent set, say $I$, of size $z$. Let $I'\subseteq V(H)$ denote the set that consists of both copies of each vertex in $I$. That is, $$I' = \big\{v^{(i)}~|~v\in I~\mbox{and}~1\leq i\leq 2\big\}$$
  Note that $I'$ is an independent set in $H$. So, all eigenvalues of the adjacency matrix of $H[I']$ are $0$. Thus, as $|V(H)\setminus I'|=2(n-z)$, it follows that $\big(H,2(n-z)\big)$ is a \textbf{\textsc{YES}} instance of $2$-\textsc{Eigenvalue Vertex Deletion}.
  
  \noindent ($\Leftarrow$) Suppose that $\big(H,2(n-z)\big)$ is a \textbf{\textsc{YES}} instance of $2$-\textsc{Eigenvalue Vertex Deletion}. That is, there exists $S\subseteq V(H)$ of size $\leq 2(n-z)$ such that the adjacency matrix of $H\setminus S$ has at most two distinct eigenvalues. Using Lemma~\ref{2eval}, $H\setminus S$ is a disjoint union of equal-sized cliques, say $C_1,\ldots, C_t$, each of size $s$. We have $|V(H)\setminus S| = s\cdot t \geq 2z$. As $H$ has no triangles, we get $s\leq 2$.
  
  \noindent Case 1: $s=1$\\
  For each $1\leq i\leq t$, let $u_i^{(\alpha_i)}$ denote the vertex of $C_i$. That is, the clique $C_i$ consists of the $\alpha_i^{th}$ copy of the vertex $u_i$ of $G$. Note that for any $1\leq i<j\leq t$, $u_i$ is not adjacent to $u_j$ in $G$; otherwise, there's an edge joining the cliques $C_i$ and $C_j$, namely $\big\{u_i^{(\alpha_i)}, u_j^{(\alpha_j)}\big\}$, as shown below.
  \begin{center}
  \includegraphics[scale=0.67]{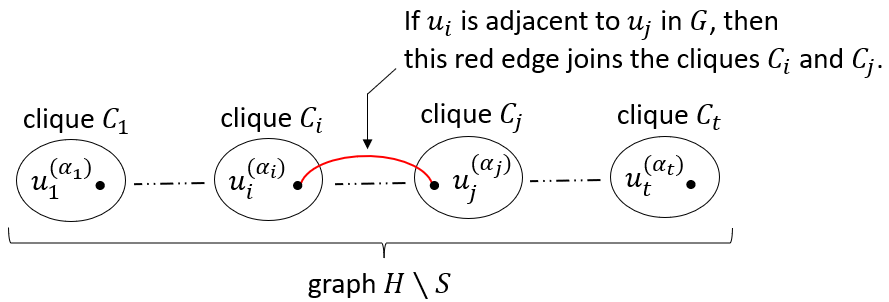}
  \end{center}
  Also, as any vertex of $G$ appears at most twice in $u_1,\ldots, u_t$, there are at least $\frac{t}{2}\geq z$ distinct vertices therein. Therefore, $G$ has an independent set of size $z$.

  \noindent Case 2: $s=2$\\
  For each $1\leq i\leq t$, let $u_i^{(\alpha_i)}$ and $v_i^{(\beta_i)}$ denote the two vertices of $C_i$. That is, the clique $C_i$ consists of the $\alpha_i^{th}$ copy of the vertex $u_i$ and the $\beta_i^{th}$ copy of the vertex $v_i$. Consider any $1\leq i<j\leq t$. Note that $u_i$ is not adjacent to $u_j$ in $G$; otherwise, there's an edge joining the cliques $C_i$ and $C_j$, namely $\big\{u_i^{(\alpha_i)}, u_j^{(\alpha_j)}\big\}$, as shown below.
  \begin{center}
  \includegraphics[scale=0.67]{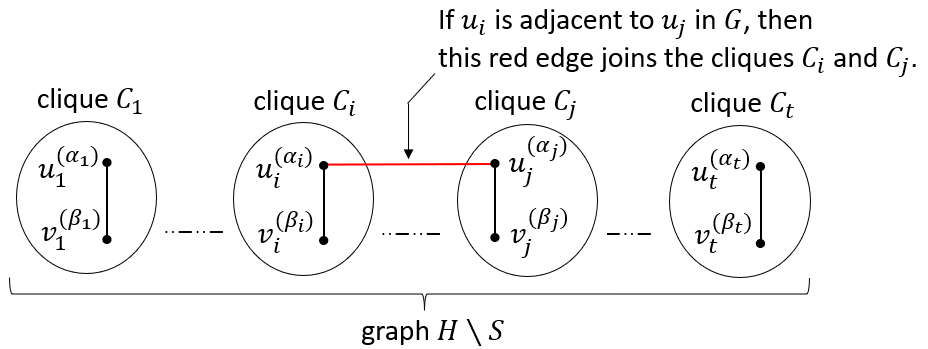}
  \end{center}
  Next, let us show that $u_i$ is distinct from $u_j$. For the sake of contradiction, assume that $u_i$ and $u_j$ are the same vertex, say $u$, of $G$. Without loss of generality, $C_i$ contains the first copy of $u$, i.e., $u^{(1)}$, and $C_j$ contains the second copy of $u$, i.e., $u^{(2)}$. Observe that $u^{(1)}$ and $u^{(2)}$ are twins in $H$. Thus, as $u^{(1)}$ is adjacent to $v_i^{(\beta_i)}$, so is $u^{(2)}$. Likewise, as $u^{(2)}$ is adjacent to $v_j^{(\beta_j)}$, so is $u^{(1)}$. Hence, as shown below, there are two edges joining the cliques $C_i$ and $C_j$, namely $\big\{u^{(1)}, v_j^{(\beta_j)}\big\}$ and $\big\{u^{(2)}, v_i^{(\beta_i)}\big\}$, a contradiction. 
  \begin{center}
  \includegraphics[scale=0.67]{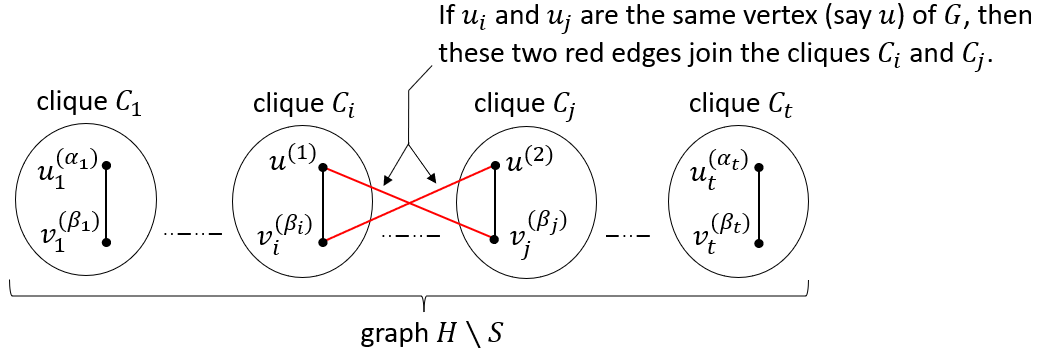}
  \end{center}
  Therefore, $u_1,\ldots, u_t$ are $t\geq z$ distinct vertices that form an independent set in $G$.     
  
  Now note that for any fixed $d$, by modifiying the reduction described above so that we have $d$ copies of each vertex, the reduced instance becomes triangle-free and $3d$-regular. It is straightforward to verify that the reduction remains valid with this modification, and therefore, we conclude that the $2$-\textsc{Eigenvalue Vertex Deletion} problem is \textsf{NP}-complete even on triangle-free and $3d$-regular graphs for any $d\geq 2$.


\textbf{Second reduction.}\\  Now we turn to the second claim in the Theorem. Let us describe a polynomial-time many-one reduction from \textsc{Independent Set} on planar cubic triangle-free graphs $\big($known to be \textsf{NP}-complete~\citep{IndesetNPhardness}$\big)$ to $2$-\textsc{Eigenvalue Vertex Deletion}. Consider an instance, say $(G,z)$, of \textsc{Independent Set}, where $G$ is a planar cubic triangle-free graph, say on $n$ vertices. Construct a graph, say $H$, as follows: For each vertex $v\in V(G)$, introduce six copies of $v$, say $v^{(1)},\ldots, v^{(6)}$, and make them pairwise adjacent to each other. Also, for each edge $e\in E(G)$, say with endpoints $u$ and $v$, make the $i^{th}$ copy of $u$, i.e., $u^{(i)}$, adjacent to the $i^{th}$ copy of $v$, i.e., $v^{(i)}$, for all $1\leq i\leq 6$. That is, $V(H) = \big\{v^{(i)}~|~v\in V(G)~\mbox{and}~1\leq i\leq 6\big\}$, and ${E(H) = \Big\{\{v^{(i)},v^{(j)}\}~|~v\in V(G)~\mbox{and}~1\leq i<j\leq 6\Big\}~ 
 \uplus  \Big\{\{u^{(i)},v^{(i)}\}~|~\{u,v\}\in E(G)~\mbox{and}~1\leq i\leq 6\Big\}}$. See \Cref{second reduction} for an example.
 
\begin{figure}
\centering  \includegraphics[scale=0.63]{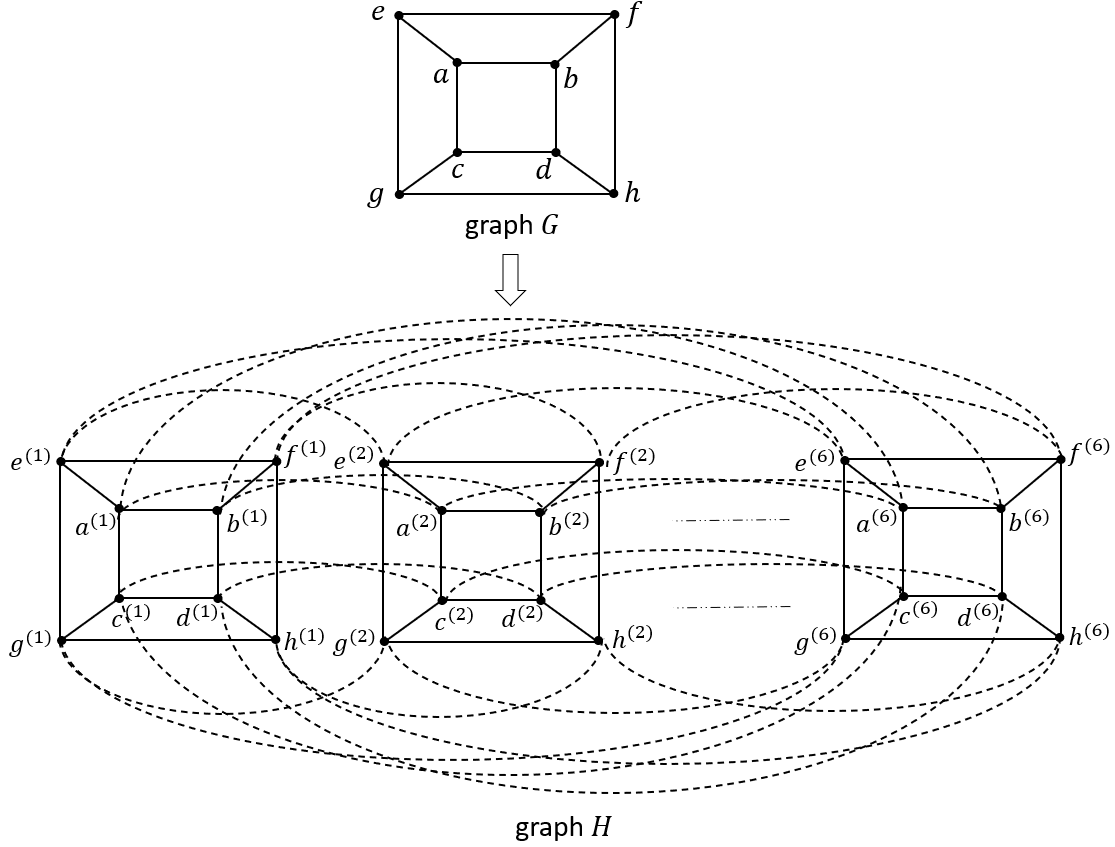}
\caption{An example illustrating the construction of $H$ from $G$ in the second reduction of \Cref{2EVD hard}.}
\label{second reduction}
\end{figure}

Note that for each vertex $v\in V(G)$ and each $1\leq i\leq 6$, the $i^{th}$ copy of $v$, i.e., $v^{(i)}$, has eight neighbours in $H$, namely the remaining five copies of $v$, and the $i^{th}$ copies of the three neighbours of $v$ in $G$. So, $H$ is an $8$-regular graph.
  
  Let us show that $G$ has an independent set of size $z$ if and only if $\big(H,6(n-z)\big)$ is a \textbf{\textsc{YES}} instance of $2$-\textsc{Eigenvalue Vertex Deletion}.
  
  \textbf{Forward direction}.\\
  Suppose that $G$ has an independent set, say $I$, of size $z$. Let $v_1,\ldots, v_z$ denote the vertices of $I$. Let $I'\subseteq V(H)$ denote the set that consists of all six copies of each of $v_1,\ldots, v_z$. That is,
  $$I' = \Big\{v_i^{(j)}~|~1\leq i\leq z~\mbox{and}~1\leq j\leq 6\Big\}$$
  As shown below, $H[I']$ is a disjoint union of $z$ cliques, each of size six, namely
  \begin{itemize}
  \item the clique on the six copies of $v_1$, i.e., $v_1^{(1)},\ldots, v_1^{(6)}$
  \item the clique on the six copies of $v_2$, i.e., $v_2^{(1)},\ldots, v_2^{(6)}$\\
  \vdots\\
  \vdots
  \item the clique on the six copies of $v_z$, i.e., $v_z^{(1)},\ldots, v_z^{(6)}$
  \end{itemize}
  
  \begin{center}
  \includegraphics[scale=0.6]{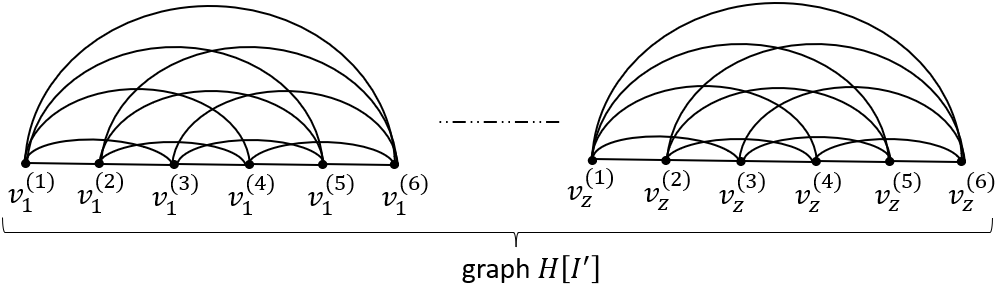}
  \end{center}

  The adjacency matrix of each of these $z$ cliques has two distinct eigenvalues, i.e., $-1$ and $5$. So, as $|V(H)\setminus I'|= 6(n-z)$, it follows that $\big(H,6(n-z)\big)$ is a \textbf{\textsc{YES}} instance of $2$-\textsc{Eigenvalue Vertex Deletion}. 
  
  \textbf{Reverse direction}.\\
  Suppose that $\big(H,6(n-z)\big)$ is a \textbf{\textsc{YES}} instance of $2$-\textsc{Eigenvalue Vertex Deletion}. That is, there exists $S\subseteq V(H)$ of size $\leq 6(n-z)$ such that the adjacency matrix of $H\setminus S$ has at most two distinct eigenvalues. Using Lemma \ref{2eval}, $H\setminus S$ is a disjoint union of equal-sized cliques, say $C_1,\ldots, C_t$, each of size $s$. We have $|V(H)\setminus S| = s\cdot t \geq 6z$. Let us show that $G$ has an independent set of size $z$. It is known that every planar cubic triangle-free graph on $n$ vertices has an independent set of size at least $\frac{3n}{8}$~\citep{LargeIndeSet}. So, if $z\leq \frac{3n}{8}$, we are done. Next, let us assume that $z>\frac{3n}{8}$.
  
  For any vertex $v\in V(G)$ and any $1\leq i<j\leq t$, it's not possible that each of $C_i$ and $C_j$ contains a copy of $v$; otherwise, there's an edge joining the cliques $C_i$ and $C_j$ in $H$. That is, each vertex of $G$ contributes its cop-y(ies) to at most one of the $t$ cliques $C_1,\ldots, C_t$. So, we get $t\leq n$. Now, we have $s\geq \big\lceil\frac{6z}{t}\big\rceil\geq \Big\lceil\frac{6\big(\frac{3n}{8}\big)}{n}\Big\rceil = 3$.
  
  For each $1\leq i\leq 6$, let $X_i$ denote the set of those vertices in $H\setminus S$ that are the $i^{th}$ copy of some vertex in $G$. We have $V(H)\setminus S = X_1\uplus\ldots \uplus X_{6}$. Now, as $|V(H)\setminus S|\geq 6z$, by pigeonhole principle, there exists $1\leq p\leq 6$ such that $X_{p}$ contains at least $z$ vertices, say $v_1^{(p)},\ldots v_z^{(p)}$. It suffices to show that $v_1,\ldots,v_z$ form an independent set in $G$.
  
  For the sake of contradiction, assume that there exist $1\leq i<j\leq z$ such that $v_i$ and $v_j$ are adjacent to each other in $G$. Then, $v_i^{(p)}$ and $v_j^{(p)}$ are adjacent to each other in $H$. So,  $v_i^{(p)}$ and $v_j^{(p)}$ belong to the same clique, say $C_{\ell}$, amongst the $t$ cliques $C_1,\ldots, C_t$. As $s\geq 3$, the clique $C_{\ell}$ has at least one vertex, say $u^{(q)}$, other than $v_i^{(p)}$ and $v_j^{(p)}$. These three vertices, i.e., $u^{(q)}$, $v_i^{(p)}$, $v_j^{(p)}$, are pairwise adjacent to each other in $H$. As shown below, this isn't possible.
  
  \begin{center}
  \includegraphics[scale=0.67]{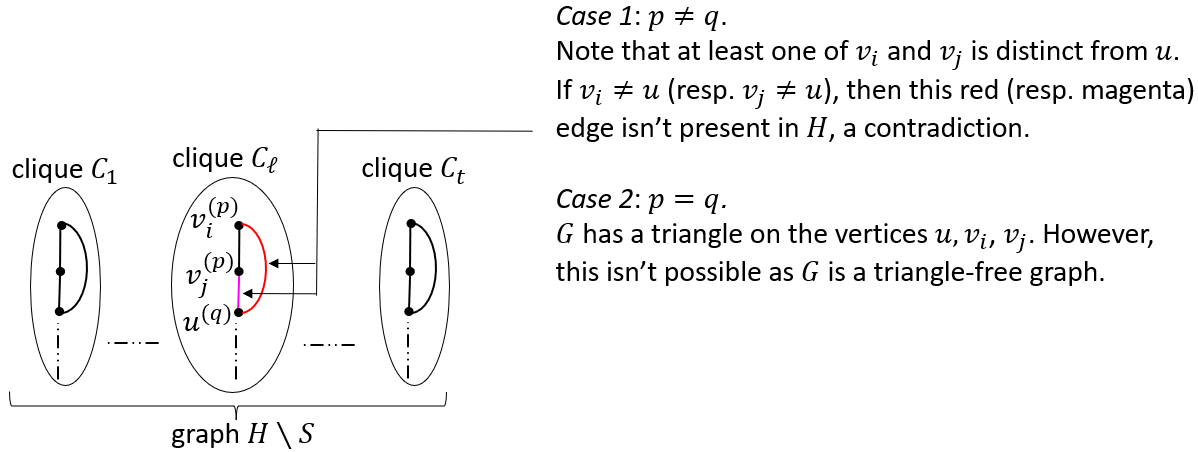}
  \end{center}
This concludes the proof of Theorem~\ref{2EVD hard}.
\end{proof}
  
Next, we note that $2$-EVD admits a branch-and-bound-based \textsf{FPT} algorithm that is similar in spirit to the naive branching algorithm for \textsc{Cluster Vertex Deletion}. As long as our instance has an induced path on three vertices $\{u,v,w\}$, we recursively solve the instances $(G \setminus \{u\},k-1)$, $(G \setminus \{v\},k-1)$ and $(G \setminus \{w\},k-1)$. Note that this branching algorithm enumerates all minimal subsets $S$ of size at most $k$ such that $G \setminus S$ is a disjoint union of cliques. At a leaf of any successful execution path of this branching algorithm, we are left with a subgraph $H$ of $G$ that is a cluster graph, and a (possibly reduced) budget $k^\prime \leq k$. At this point, we guess the value of $\ell$, and extend our solution greedily by: (a) deleting all cliques smaller than $\ell$, and (b) for any cliques of size, say $q$ where $q > \ell$, we delete an arbitrary subset of $(q-\ell)$ vertices. We have a valid solution at this if and only if there is some $\ell$ for which the cost of ``uniformizing'' the cluster graph $H$ to cliques of size $\ell$ is within the remaining budget $k^\prime$. 

\begin{restatable}{theorem}{twoEVDfpt}
\label{2EVD fpt}
$2$-\textsc{Eigenvalue Vertex Deletion} can be solved in time $\mathcal{O}^{\star}(3^k)$.
\end{restatable}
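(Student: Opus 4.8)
The plan is to formalize the branch-and-bound strategy already outlined informally before the theorem statement, and then account for the running time carefully. The algorithm has two conceptual phases: a branching phase that reduces the instance to a cluster graph, followed by a polynomial ``uniformization'' phase once no induced $P_3$ remains.

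First I would establish the branching rule. As long as $G$ contains an induced path on three vertices $\{u,v,w\}$ (that is, a vertex $v$ with two non-adjacent neighbors $u,w$), any solution $S$ with $G \setminus S$ a disjoint union of cliques must delete at least one of $u,v,w$, since an induced $P_3$ cannot survive inside a cluster graph. Hence I branch into the three instances $(G \setminus \{u\}, k-1)$, $(G \setminus \{v\}, k-1)$, and $(G \setminus \{w\}, k-1)$, each decreasing the budget by one. The key correctness observation is that a graph is a disjoint union of cliques if and only if it is $P_3$-free, so the branching exhaustively searches for a minimal vertex set whose removal yields a cluster graph. When $k$ reaches $0$ and an induced $P_3$ still exists, that branch fails. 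This phase produces a search tree of depth at most $k$ and branching factor $3$, hence $\mathcal{O}(3^k)$ leaves, with polynomial work (finding an induced $P_3$) at each node.

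The second step handles a leaf, where we have a cluster graph $H$ (an induced subgraph of $G$) and a residual budget $k' \leq k$. Here I would use \Cref{2eval}: it remains to delete vertices so that $H$ becomes a disjoint union of \emph{equal-sized} cliques. I guess the common clique size $\ell \in \{1, \dots, |V(G)|\}$; for each fixed $\ell$, the cheapest way to uniformize $H$ is to delete every clique of size strictly less than $\ell$ entirely, and to trim every clique of size $q > \ell$ down to $\ell$ by deleting an arbitrary set of $q - \ell$ vertices. I would argue this greedy cost is optimal for the fixed target $\ell$: cliques smaller than $\ell$ cannot be enlarged by vertex deletion and must vanish, while larger cliques must lose exactly their excess, and distinct cliques never interact under deletion. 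The leaf succeeds if and only if this cost is at most $k'$ for some $\ell$, which is checkable in polynomial time by iterating over all $\ell$.

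Finally I would assemble the running time. Correctness follows because the branching enumerates all minimal deletion sets that produce a cluster graph, and every optimal $2$-EVD solution restricted to any such cluster target is captured by the uniformization step; conversely every accepting leaf yields a valid solution of size at most $k$. The dominant cost is the $\mathcal{O}(3^k)$ leaves of the search tree, each followed by polynomial-time uniformization, giving the claimed $\mathcal{O}^{\star}(3^k)$ bound. I do not anticipate a serious obstacle here; the only point requiring care is verifying that the greedy uniformization is genuinely optimal for each fixed $\ell$ and that combining ``reduce to cluster graph'' with ``uniformize'' does not miss any optimal solution — in particular, that it suffices to first make the graph $P_3$-free and only then equalize clique sizes, rather than interleaving the two kinds of deletions. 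This is justified because any deletion set $S$ can be partitioned according to whether a deleted vertex destroys an induced $P_3$ or merely shrinks an already-formed clique, and the branching captures the former while uniformization optimally realizes the latter.
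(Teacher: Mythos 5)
Your proposal is correct and follows essentially the same approach as the paper: branch three ways on an induced $P_3$ to a depth of $k$, and at $P_3$-free leaves solve the uniformization over the target clique size $\ell$ in polynomial time (the paper just writes the optimal cost in closed form, $\sum_i s_i - \max_j s_j\cdot\mu(s_j)$, rather than iterating over all $\ell$, which is an immaterial difference).
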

\begin{proof}
Let us describe a recursive branching algorithm. Consider an instance, say $(G, k)$, of $2$-\textsc{Eigenvalue vertex Deletion}. By Lemma \ref{2eval}, our goal is to decide whether we can delete at most $k$ vertices from $G$ to get a disjoint union of equal-sized cliques. First, we check if $G$ has an induced path on three vertices. This takes polynomial time.

  \noindent Case 1: $G$ has no induced path on three vertices\\
  The graph $G$ is a disjoint union of cliques, say $C_1, \ldots, C_t$, of sizes $s_1,\ldots, s_t$ respectively. We know that deleting the vertices of any solution results in a disjoint union of equal-sized cliques $\big($say, of size $x\big)$. Observe that for each $1\leq i\leq t$, 
  \begin{itemize}
  \item If $s_i\geq x$, then $s_i-x$ vertices of the clique $C_i$ are deleted, leaving behind $x$ of its vertices.
  \item If $s_i<x$, then the entire clique $C_i$, i.e., all its $s_i$ vertices, are deleted.
  \end{itemize}
  So, the overall solution size, i.e., total number of deleted vertices, is
  \[\sum_{\substack{1\leq i\leq t:\\ s_i\geq x}} (s_i-x) + \sum_{\substack{1\leq i\leq t:\\ s_i< x}} s_i =\sum_{i=1}^t s_i- x \cdot \mu(x)\]
  where $\mu(x)$ denotes the number of $s_i$'s amongst $s_1,\ldots,s_t$ such that $s_i\geq x$.
  
  Thus, the size of any minimum-sized solution is 
  \[\sum_{i=1}^t s_i - \max_{1\leq j\leq t} \big(s_j \cdot \mu(s_j)\big)\]
  
  If this size is $\leq k$, we return \textbf{\textsc{YES}}; otherwise, we return \textbf{\textsc{NO}}. This takes polynomial time.
  
  \noindent Case 2: $G$ has an induced path on three vertices, say $a-b-c$\\
  Note that any solution must pick at least one of its three vertices, i.e., $a, b, c$. So, if $k=0$, we return NO; otherwise, we guess a vertex that is picked into solution. That is, we branch as follows: In the first $\big($resp. second and third$\big)$ branch, we include the vertex $a$ (resp. $b$ and $c$) into solution, delete it from $G$, and reduce the parameter $k$ by $1$. It takes polynomial time to create the subproblems  $\big(G\setminus \{a\}, k-1\big)$, $\big(G\setminus\{b\}, k-1\big)$ and $\big(G\setminus \{c\}, k-1\big)$. Next, we run our algorithm on these three instances. If at least one of these three recursive calls returns \textbf{\textsc{YES}}, so do we; otherwise, we return \textbf{\textsc{NO}}.
  
  The depth of our search tree is at most $k$. Also, each of its internal nodes has three children. Therefore, it has at most $\mathcal{O}(3^k)$ nodes. Thus, as we spend polynomial time at each node, the overall running time is at most $\mathcal{O}^{\star}(3^k)$. This concludes the proof of Theorem~\ref{2EVD fpt}.
\end{proof}
We now show that $2$-\textsc{Eigenvalue Vertex Deletion} can be solved in polynomial time when the input is a forest. Let $(G,k)$ be an instance of $2$-EVD where $G$ is a forest. Note that if $S$ is a valid solution, then $G \setminus S$ is either independent or a disjoint collection of edges. 

Therefore, we can arrive at an optimal solution by computing the size of a maximum independent set and a maximum induced matching: this can be done in polynomial time on forests \citep{alt1991computing,zito2000linear}. We also note that a similar argument applies to $d$-regular graphs for $d \leqslant 2$.
  
\begin{restatable}{proposition}{twoEVDtrees}
  \label{2EVD trees}
  $2$-\textsc{Eigenvalue Vertex Deletion} admits polynomial time algorithms on forests and $d$-regular graphs for $d \leqslant 2$.
\end{restatable}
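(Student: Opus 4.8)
The plan is to invoke \Cref{2eval} to restate the task: a set $S$ is a solution exactly when $G \setminus S$ is a disjoint union of cliques that all share a common size $\ell$, and we seek the smallest such $S$. The key observation is that the host graph class sharply restricts the feasible value of $\ell$. When $G$ is a forest, $G \setminus S$ is again a forest and hence triangle-free, so no component can be a clique on three or more vertices; thus $\ell \in \{1,2\}$. When $G$ is $d$-regular with $d \leq 2$, every induced subgraph has maximum degree at most $2$, so a clique component has at most three vertices and $\ell \in \{1,2,3\}$. In either setting only a constant number of values of $\ell$ need to be considered.

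Given this, I would solve the problem separately for each admissible $\ell$ and return the best outcome, since the ``equal-sized'' requirement forces all retained components to be $K_\ell$ for a single fixed $\ell$. For $\ell = 1$, demanding that $G \setminus S$ be edgeless is exactly demanding that $V \setminus S$ be an independent set, so the minimum deletion is $n - \alpha(G)$, where $\alpha(G)$ is the size of a maximum independent set. For $\ell = 2$, requiring $G \setminus S$ to be a disjoint union of edges means that $V \setminus S$ induces a $1$-regular graph, i.e. the retained edges form an \emph{induced} matching; hence the minimum deletion is $n - 2\nu$, where $\nu$ is the size of a maximum induced matching. Both maximum independent set and maximum induced matching are computable in polynomial time on forests \citep{alt1991computing,zito2000linear}, and trivially on disjoint unions of cycles (the only nontrivial $d$-regular case, $d = 2$).

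The remaining case is $\ell = 3$ for $2$-regular graphs. Here every retained clique is a triangle $K_3$, and since a cycle of length greater than three contains no triangle, a $K_3$ component of $G \setminus S$ must coincide with a $C_3$ connected component of $G$. To minimize deletions we keep all such components, so the minimum deletion is $n - 3c$, where $c$ is the number of triangle components of $G$, which is read off in linear time. Taking the minimum over all admissible $\ell$ and comparing against the budget $k$ yields the decision in polynomial time.

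The step I expect to require the most care is the $\ell = 2$ analysis: one must notice that a disjoint union of $K_2$'s corresponds precisely to an \emph{induced} matching rather than an ordinary maximum matching, whose endpoints might carry additional edges and thus fail to induce a $1$-regular graph, and one must confirm that this auxiliary problem remains polynomial on the relevant classes. Everything else amounts to bookkeeping over the constantly many feasible clique sizes.
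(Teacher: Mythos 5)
Your proposal is correct and takes essentially the same route as the paper: apply \Cref{2eval}, observe that triangle-freeness (for forests) or maximum degree (for $d\leq 2$) restricts the target clique size $\ell$ to a constant set, and reduce each case to maximum independent set, maximum induced matching, or keeping the triangle components of a disjoint union of cycles. Your explicit inclusion of the $\ell=1$ case for $2$-regular graphs is harmless (it is dominated by $\ell=2$ there, which is why the paper omits it), and your closed form $n-3c$ for $\ell=3$ agrees with the paper's $k_1$.
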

\begin{proof}
Consider an instance, say $(T,k)$, of $2$-\textsc{Eigenvalue Vertex Deletion}, where $T$ is a tree, say on $n$ vertices. Note that our goal is to decide if at least one of the following holds true:
  \begin{itemize}
  \item There exists $S\subseteq V(T)$ of size $\leq k$ such that $T\setminus S$ is a disjoint union of $1$-sized cliques. That is, $T$ has a vertex cover of size $\leq k$.
  \item There exists $S\subseteq V(T)$ of size $\leq k$ such that $T\setminus S$ is a disjoint union of $2$-sized cliques. That is, $T$ has an induced matching of size $\geq \frac{n-k}{2}$.
  \end{itemize}
  It is known that there is a polynomial time algorithm to find a minimum vertex cover  \cite{alt1991computing} and a maximum induced matching \cite{zito2000linear} when the input graph is a tree. So, $2$-\textsc{Eigenvalue Vertex Deletion} is polynomial-time solvable on trees. 
  
  Next, let us consider the case when input graph $G$ is $d$-regular with $d \leq 2$. Note that if $d=2$, then $G$ is disjoint union of cycles, and if $d=1$, then $G$ is disjoint union of edges. In the latter case, $G$ already has only two distinct eigenvalues i.e., $-1$ and $1$; thus, no vertex deletions are needed. Let us assume that $G$ is disjoint union of cycles, say $C_1,\ldots,C_t$. Note that

\begin{itemize}
    \item The minimum number of vertex deletions needed to get a disjoint union of three-sized cliques is \[k_1:=\underset{\substack{1\leq i\leq t: \\C_i \mbox{ is not a triangle}}}{\sum} \ell(C_i)\]

    \item The minimum number of vertex deletions needed to get a disjoint union of two-sized cliques is
    \[k_2:=\underset{\substack{1\leq i\leq t:\\\ell(C_i)\equiv 0~(mod~3)}}{\sum}\frac{\ell(C_i)}{3} +\underset{\substack{1\leq i\leq t:\\\ell(C_i)\equiv 1~(mod~3)}}{\sum}\frac{\ell(C_i)+2}{3}+\underset{\substack{1\leq i\leq t:\\\ell(C_i)\equiv 2~(mod~3)}}{\sum}\frac{\ell(C_i)+4}{3}\]
\end{itemize}
 where for every $1\leq i\leq t$, $\ell(C_i)$ denotes the length of the cycle $C_i$. 

 If $min(k_1, k_2)\leq k$, we return \textbf{\textsc{YES}}; otherwise, we return \textbf{\textsc{NO}}. Thus, $2$-\textsc{Eigenvalue Vertex Deletion} is polynomial-time solvable on $2$-regular graphs.
\end{proof}

\subsection{r-EVD for $r \geqslant 3$}

To demonstrate the hardness of $r$-EVD for any fixed $r \geqslant 3$, we give a reduction from \textsc{Vertex Cover on Cubic Graphs}. 

\begin{restatable}{theorem}{rEVDhard}
\label{rEVD hard} 
Let $r\geq 3$ be an integer. Then, $r$-\textsc{Eigenvalue Vertex Deletion} is \textsf{NP}-complete, even on bipartite graphs of maximum degree four.
\end{restatable}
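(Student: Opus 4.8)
The plan is to reduce from \textsc{Vertex Cover} on cubic graphs, which is \NPH. Membership in \NP is immediate: a candidate deletion set $S$ can be verified by computing the spectrum of $A_{G\setminus S}$ in polynomial time (using the cited spectral computation results) and counting distinct eigenvalues. Given a cubic instance $(G,z)$ I will build a bipartite graph $G'$ of maximum degree four together with a budget $k$, so that $G'$ has a set of at most $k$ vertices whose deletion leaves at most $r$ distinct eigenvalues if and only if $G$ has a vertex cover of size at most $z$. The guiding principle is the standard one for deletion-to-$\mathcal F$ reductions: deleting a vertex cover should leave a ``clean'' graph meeting the eigenvalue bound, while any surviving edge should be spectrally detectable.

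The crux is the choice of a common target spectrum and a vertex gadget. First I would fix a set $\Lambda$ of exactly $r$ reals that is simultaneously realised by several low-degree bipartite graphs; for the symmetric part one can anchor $\Lambda$ on spectra such as that of $K_{1,4}$ and $C_4$ (both equal to $\{-2,0,2\}$ and both bipartite of maximum degree four), extending to $r$ values by attaching further controlled structure. To each vertex $v$ of $G$ I would attach a private bipartite gadget $Y_v$, connected to $v$ through one or more low-degree ports, engineered so that: (i) $G'$ is bipartite and every vertex---including $v$, whose three original incidences already consume degree three---has degree at most four; (ii) if $v$ is deleted, the gadget $Y_v$ decomposes into components whose spectra lie inside $\Lambda$; (iii) if $v$ is kept while all its $G$-neighbours are deleted, then $v$ together with $Y_v$ forms a component whose spectrum also lies inside $\Lambda$; and (iv) if an edge $uv$ of $G$ survives (both endpoints kept), the merged component $Y_u$--$u$--$v$--$Y_v$ is forced to exhibit an eigenvalue outside $\Lambda$, equivalently more than $r$ distinct eigenvalues overall. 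The budget $k$ is then calibrated so that the cheapest way to satisfy the eigenvalue bound is to delete exactly a vertex cover of $G$.

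For correctness, the forward direction should be routine: starting from a vertex cover $C$ of size $z$, delete $C$; by (ii) and (iii) every surviving component has spectrum contained in the single set $\Lambda$, so the disjoint union $G'\setminus C$ has at most $|\Lambda|=r$ distinct eigenvalues. The reverse direction needs a cleaning argument: given a solution $S$ with $|S|\le k$, I would first argue by an exchange step that $S$ may be assumed to consist of original vertices only, since deleting a gadget vertex is never more useful than deleting the incident original vertex. I would then show $S\cap V(G)$ is a vertex cover: if some edge $uv$ survived, the component containing it would contain a forbidden induced configuration, and by Cauchy interlacing (\Cref{Cauchy interlacing}) the eigenvalues of this principal submatrix interlace those of the whole component, which I would use to exhibit a distinct eigenvalue outside $\Lambda$. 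A size comparison then yields the equivalence $|C|\le z \iff |S|\le k$.

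The main obstacle is the global spectral bookkeeping in item (iv) together with the gadget design in (i)--(iii). Because the number of distinct eigenvalues of a disjoint union is the union of the per-component spectra, the gadgets must be built so that every local ``state''---$v$ kept-and-isolated, $v$ deleted, or an edge surviving---produces spectra nested inside a single set $\Lambda$ of size exactly $r$, and this must be reconciled with the hard constraints that the graph stays bipartite and no vertex exceeds degree four (note that the naive ``$v$ at the centre of a path or star'' gadget already violates the degree cap, so the ports must distribute the degree). Proving (iv)---that a surviving edge genuinely forces an eigenvalue outside $\Lambda$---is the most delicate part, and I expect to lean on interlacing (\Cref{Cauchy interlacing}) and on the structural characterisations available for graphs with few distinct eigenvalues (such as \Cref{2eval} and \Cref{smallest ev -1}) to certify the forbidden eigenvalue.
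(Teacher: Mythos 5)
Your overall strategy (reduce from \textsc{Vertex Cover} on cubic graphs, attach per-vertex spectral gadgets, calibrate the budget) is the right one and matches the paper's, but the proposal stops short of a proof exactly where the work lies, and one of its stated commitments cannot be met as described. Concretely: you keep the original edges of $G$ intact (you say the ``three original incidences already consume degree three''), yet a cubic \textsc{Vertex Cover} instance is in general non-bipartite, and \textsc{Vertex Cover} is polynomial on bipartite graphs, so no choice of bipartite pendant gadgets can make $G'$ bipartite while $G$ survives as a subgraph. The paper resolves this by \emph{subdividing every edge} of $G$ with a vertex $\triangle_e$ (turning every cycle into an even cycle) and attaching a pendant path on $\ell-1$ vertices to $\triangle_e$ and one on $\ell$ vertices to each original vertex, where $\ell=\lfloor (r-1)/2\rfloor$; the subdivision vertex doubles as the ``edge gadget'' that your plan omits. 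Second, your conditions (ii)--(iv) ask for a \emph{fixed} target set $\Lambda$ of exactly $r$ reals containing the spectrum of every surviving component, and you acknowledge you do not know how to build gadgets achieving this. The paper never pins $\Lambda$ down: after deleting a vertex cover, the surviving components are either paths on $\ell$ vertices or copies of one fixed spider (a centre with four pendant legs of length $\ell$), and the bound is obtained by \emph{counting}: the spider minus its centre is four disjoint $\ell$-paths, so by \Cref{Cauchy interlacing} each of the $\ell$ path eigenvalues occurs in the spider with multiplicity at least $3$, leaving room for at most $\ell+1$ further distinct values, for a total of $2\ell+1\le r$. Since all spiders are isomorphic, their ``extra'' eigenvalues coincide across components, which is what replaces your requirement that they lie in a prescribed $\Lambda$.

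For the reverse direction your plan is also harder than necessary and not obviously completable: interlacing does not by itself ``exhibit a distinct eigenvalue outside $\Lambda$'' for a component containing a surviving edge. The paper instead invokes \Cref{diameter bound}: if an edge $uv$ of $G$ survives together with $\triangle_{uv}$ and both pendant paths, the component has diameter at least $2\ell+2$ and hence at least $2\ell+3>r$ distinct eigenvalues, an immediate contradiction. This also makes your exchange step unnecessary; one simply maps any deleted gadget vertex back to the original vertex (or an endpoint of the subdivided edge) it hangs off, without changing the budget. As written, your proposal identifies the correct reduction source and correctly names the obstacles, but the gadget construction, the bipartiteness mechanism, and both spectral bounds---i.e.\ the substance of the proof---are missing or misdirected.
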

\begin{proof}
Consider an instance, say $(G,k)$, of \textsc{Vertex Cover}, where $G$ is a cubic graph. Let us construct a graph, say $H$, as follows: For each vertex $v\in V(G)$, let us attach a path on $\ell:=\big\lfloor\frac{r-1}{2}\big\rfloor$ vertices $\big($denoted by \FiveStarOpen$_{v}$'s$\big)$ to $v$.
  
  \begin{center}
  \includegraphics[scale=0.67]{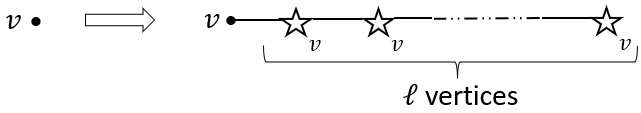}
  \end{center}
  
  Also, for each edge $e\in E(G)$, say with endpoints $u$ and $v$, let us subdivide $e$ into two edges using a vertex $\big($denoted by $\triangle_e$$\big)$, and then attach a path on $\ell-1$ vertices $\big($denoted by $\square_{e}$'s$\big)$ to $\triangle_e$, as shown below.
  \begin{center}
  \includegraphics[scale=0.7]{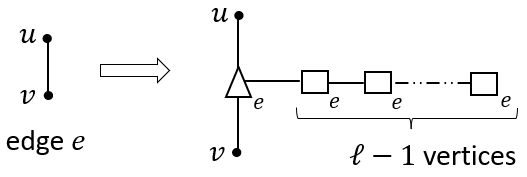}
  \end{center}
  
  See \Cref{rEVD construction} for an example.
 
  Observe that every cycle in $H$ is obtained from some cycle in $G$ by subdividing each of its edges into two edges. So, all cycles in $H$ have even length. That is, $H$ is a bipartite graph. Let us show that $G$ has a vertex cover of size $\leq k$ if and only if $(H,k)$ is a \textbf{\textsc{YES}} instance of $r$-\textsc{Eigenvalue Vertex Deletion}. 
  
  \noindent ($\Rightarrow$) Suppose that $G$ has a vertex cover, say $S$, of size $\leq k$. Observe that the graph $H\setminus S$ consists of
  three types of components (see Figure \ref{component types}).
  \begin{figure}[H]
 \centering \includegraphics[scale=0.67]{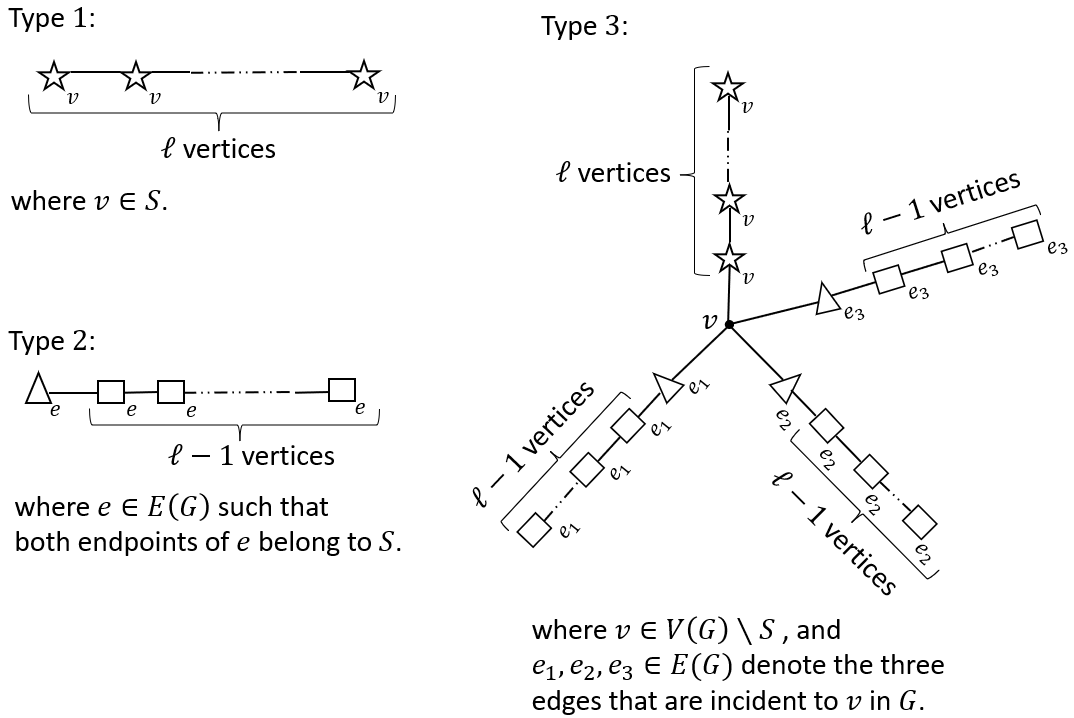}
\caption{Three types of components of $H\setminus S$ in the forward direction of the proof of Theorem~\ref{rEVD hard}.} \label{component types}
  \end{figure}
  
Each Type 1 or Type 2 component of $H\setminus S$ is a path on $\ell$ vertices; the $\ell$ eigenvalues of its adjacency matrix are $\lambda_1,\ldots, \lambda_{\ell}$, where $\lambda_j:= 2 \cos\big(\frac{\pi j}{\ell+1}\big)$ for each $1\leq j\leq \ell$. Next, consider any Type 3 component, say $C$, of $H\setminus S$. Let $v$ denote the central vertex of $C$. Note that $C\setminus \{v\}$ is a disjoint union of four paths, each on $\ell$ vertices. So, the adjacency matrix of $C\setminus \{v\}$, i.e., $A_{C\setminus \{v\}}$, has eigenvalues $\lambda_1,\ldots,\lambda_{\ell}$, each with multiplicity $4$. Also, using Cauchy interlacing, i.e., Lemma \ref{Cauchy interlacing}, we know that the eigenvalues of $A_{C\setminus \{v\}}$ interlace the eigenvalues of the adjacency matrix of $C$, i.e., $A_C$. Therefore, each of $\lambda_1,\ldots, \lambda_{\ell}$ is an eigenvalue of $A_{C}$ with multiplicity at least $3$ (as shown in Figure~\ref{interlacing}).

  \begin{figure}[H]
  \centering
  \includegraphics[scale=0.6]{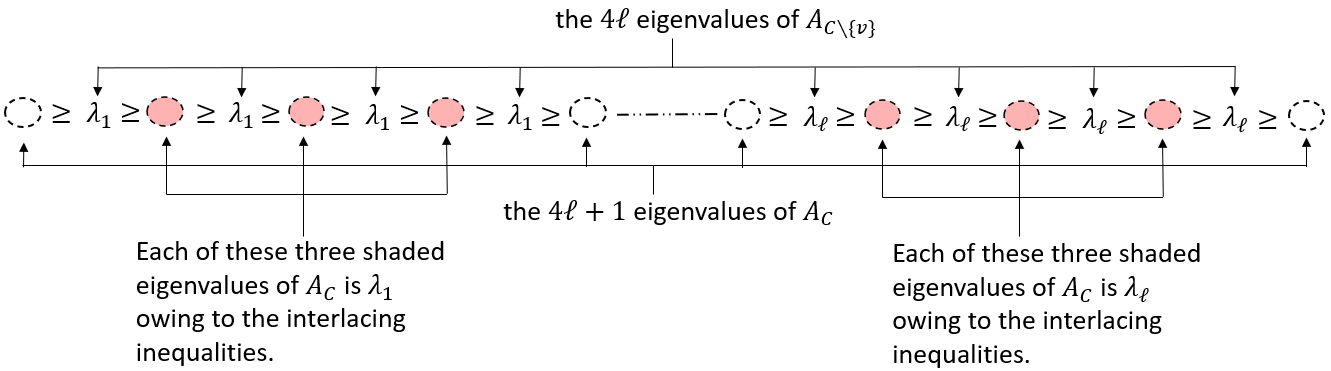}
 \caption{Eigenvalues of $A_{C\setminus \{v\}}$ and $A_C$ in the forward direction of the proof of Theorem~\ref{rEVD hard}.} \label{interlacing}
  \end{figure}
  Now, apart from $\lambda_1,\ldots, \lambda_{\ell}$, there are at most $(4\ell+1)-3\ell = \ell+1$ eigenvalues of $A_C$. So, the adjacency matrix of $H\setminus S$ has at most $\ell+(\ell+1) = 2\big\lfloor\frac{r-1}{2}\big\rfloor+1 \leq r$ distinct eigenvalues. Thus, $(H,k)$ is a \textbf{\textsc{YES}} instance of $r$-\textsc{Eigenvalue Vertex Deletion}.
  
  In \Cref{table-eigenvalues}, we provide the list of eigenvalues of the adjacency matrices of Type 1, 2, 3 components, when $\ell = 1, 2, 3, 4$.

  \begin{figure}[H]
  \centering
  \includegraphics[scale=0.67]{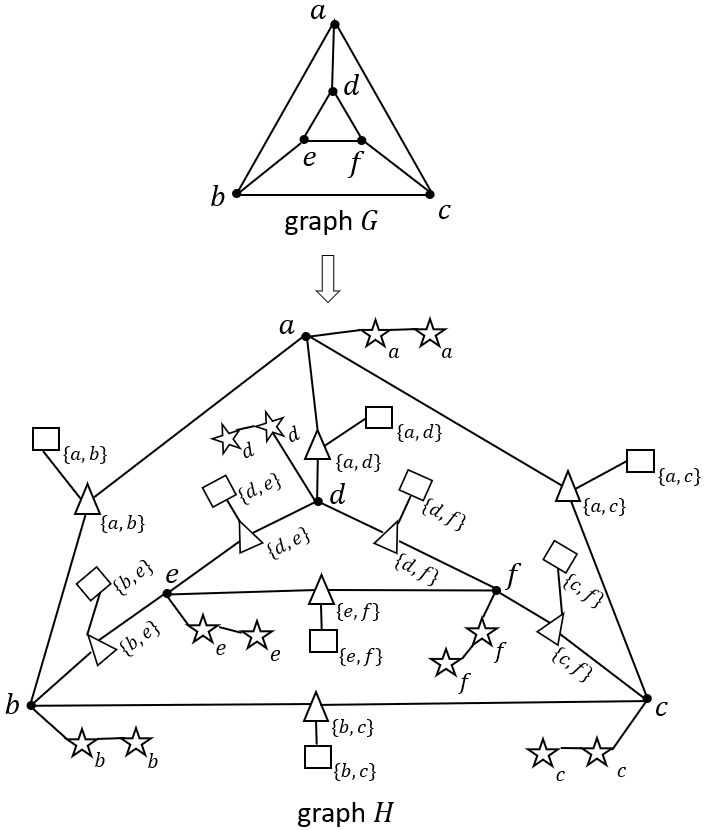}
  \caption{An example illustrating the construction of $H$ from $G$, when $\ell=2$, in \Cref{rEVD hard}.}
  \label{rEVD construction}
  \end{figure}
  \begin{sidewaysfigure}
  \centering
  \includegraphics[scale=0.65]{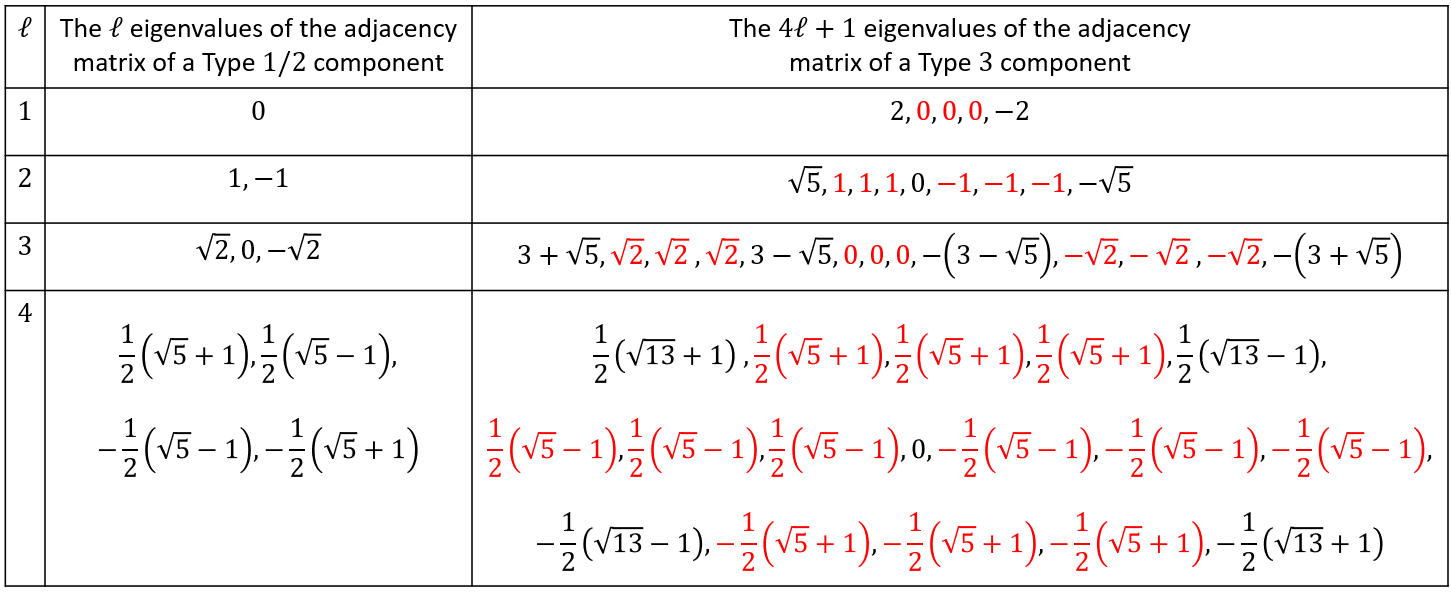}
  \caption{A table listing the eigenvalues of the adjacency matrices of Type 1, 2, 3 components, when $\ell = 1, 2, 3, 4$.}
  \label{table-eigenvalues}
  \end{sidewaysfigure}
  
  \noindent ($\Leftarrow$) Suppose that $(H,k)$ is a \textbf{\textsc{YES}} instance of $r$-\textsc{Eigenvalue Vertex Deletion}. That is, there exists $S\subseteq V(H)$ of size $\leq k$ such that the adjacency matrix of $H\setminus S$ has at most $r$ distinct eigenvalues. Let us construct a set $S'\subseteq V(G)$ as follows: For each vertex $v\in V(G)$ such that $v\in S$ or at least one of the $\ell$ \FiveStarOpen$_{v}$'s belongs to $S$, add $v$ to $S'$. Also, for each edge $e\in E(G)$ such that $\triangle_e$ $\in S$, arbitrarily pick an endpoint of $e$ and add it to $S'$. Note that $|S'|\leq k$. Now, it suffices to show that $S'$ is a vertex cover of $G$.
  
  For the sake of contradiction, assume that there exists an edge, say $e$, of $G$ such that neither of its endpoints, say $u$ and $v$, belongs to $S'$. Then, none of $u, v,$$\triangle_e$, the $\ell$ \FiveStarOpen$_{u}$'s and the $\ell$ \FiveStarOpen$_{v}$'s belong to $S$. Let $C$ denote the component of $H\setminus S$ that contains all these $2\ell+3$ vertices. Consider the following two vertices: i) the \FiveStarOpen$_u$ farthest from $u$, and ii) the \FiveStarOpen$_v$ farthest from $v$. Note that in $C$, the shortest path joining these two vertices has $2\ell+2$ edges.
  
  \begin{center}
  \includegraphics[scale=0.67]{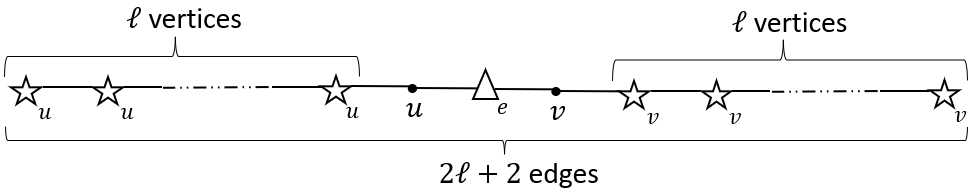}
  \end{center}
  
  So, the diameter of $C$ is at least $2\ell+2$. Thus, using Lemma \ref{diameter bound}, the adjacency matrix of $C$ (and so, $H\setminus S$) has at least $(2\ell+2)+1 = 2\big\lfloor\frac{r-1}{2}\big\rfloor+3>r$ distinct eigenvalues, a contradiction.
\end{proof}

Next, we show that $r$-EVD is \textsf{FPT} in the combined parmeter $k + \Delta(G)$, where $\Delta(G)$ is the maximum degree of $G$.

\begin{restatable}[]{theorem}{rEVDfpt}
\label{rEVD FPT}
Let $r \geq 3$ be an integer. Then, $r$-\textsc{Eigenvalue Vertex Deletion} admits an \textsf{FPT} algorithm running in time $\mathcal{O}^{\star}\Big((r+1)^{2k}\cdot 2^{k^2}\cdot \big(\Delta(G)\big)^{rk}\Big)$.
\end{restatable}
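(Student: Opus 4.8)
The plan is to exploit the diameter lower bound on the number of distinct eigenvalues (\Cref{diameter bound}) to reduce to a problem on bounded-size components, and then to resolve the residual spectral condition by guessing a bounded amount of local structure. The starting observation is that if $A_{G\setminus S}$ has at most $r$ distinct eigenvalues, then by \Cref{diameter bound} every connected component of $G\setminus S$ has diameter at most $r-1$; since $G\setminus S$ has maximum degree at most $\Delta:=\Delta(G)$, a ball of radius $r-1$ contains at most $1+\Delta\sum_{i=0}^{r-2}(\Delta-1)^i=\Delta^{O(r)}=:N$ vertices, so every surviving component has at most $N$ vertices. I would first enforce this necessary condition by branching. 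While the current graph contains two vertices at distance exactly $r$, I fix a shortest path $P=v_0v_1\cdots v_r$ between them and branch into the $r+1$ subinstances obtained by deleting one $v_i$ and decrementing the budget. This branching is complete: distances never decrease when vertices are removed, so if no vertex of $P$ were deleted then $v_0$ and $v_r$ would remain at distance exactly $r$ in the same component, forcing diameter at least $r$ and hence, by \Cref{diameter bound}, more than $r$ distinct eigenvalues; thus every solution contains a vertex of $P$. The branching has depth at most $k$ and branching factor $r+1$, and --- crucially --- must be continued even after deletions, since deleting a vertex can increase the diameter of a surviving component; it terminates exactly when all components have diameter at most $r-1$, i.e.\ at most $N$ vertices.

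At a leaf we are left with a disjoint union of components of size at most $N$ together with a residual budget $b\le k$, and must decide whether deleting at most $b$ further vertices makes the union of the component spectra have at most $r$ distinct values. This residual problem is where the global nature of the condition is felt: a set of at most $b$ components is touched by the deletions, every untouched component must have its entire spectrum contained in the final target set $\Lambda$ of at most $r$ eigenvalues, and each touched component must, after deletion, also have spectrum contained in $\Lambda$. I would resolve this by guessing the interaction pattern of the at most $k$ solution vertices --- which pairs among them are adjacent and how they attach to the surviving neighbourhoods --- which over the $\binom{k}{2}$ pairs contributes a factor $2^{O(k^2)}$, and by brute-forcing the deletion pattern inside the radius-$r$ neighbourhoods of the solution vertices, each of size $\Delta^{O(r)}$, which over the $k$ vertices contributes a factor $\Delta^{O(rk)}$. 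For every completed candidate set $S$ produced in this way I would simply compute the spectrum of $G\setminus S$ in polynomial time (as recorded in the preliminaries) and verify that it has at most $r$ distinct eigenvalues. Combining the $(r+1)^{O(k)}$ branching with the $2^{O(k^2)}$ and $\Delta^{O(rk)}$ guessing yields the claimed bound $\mathcal{O}^{\star}\big((r+1)^{2k}\cdot 2^{k^2}\cdot(\Delta(G))^{rk}\big)$.

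The main obstacle is that, unlike the diameter (equivalently, $s$-club) condition, the spectral condition is \emph{global}: the number of distinct eigenvalues of $G\setminus S$ is the size of the \emph{union} of the spectra of its components, so a purely local branching rule that fixes one component at a time cannot by itself certify the target of $r$ distinct eigenvalues. This is compounded by the fact that the converse of \Cref{diameter bound} fails --- bounded diameter does not imply at most $r$ distinct eigenvalues --- so the problem cannot be reduced to a clean combinatorial diameter-deletion problem, and every candidate solution must ultimately be checked against the actual spectrum. Reconciling the global eigenvalue count with local branching is therefore the crux; the resolution is precisely that the combined parameter $k+\Delta$ forces every surviving component to have only $\Delta^{O(r)}$ vertices, so that all the relevant spectral information is carried by a bounded neighbourhood of the at most $k$ deleted vertices, over which the target eigenvalue set and the deletion pattern can be enumerated within the stated budget.
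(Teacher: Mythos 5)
Your first phase --- branching on a shortest path $v_0v_1\cdots v_r$ with branching factor $r+1$, and the observation that at a leaf every component has diameter at most $r-1$ and hence at most $\Delta(G)^{O(r)}$ vertices --- matches the paper exactly and is correct. The gap is in the residual problem. You correctly identify that the spectral condition is global and that the solution touches at most $k$ of the (possibly $\Theta(n)$ many) small components, but your proposed enumeration never locates \emph{which} components those are. Guessing the adjacency pattern among the $k$ solution vertices ($2^{O(k^2)}$) and brute-forcing deletion patterns inside the radius-$r$ balls around the solution vertices ($\Delta^{O(rk)}$) both presuppose that you already know where the solution vertices sit; without that, the only remaining option is to choose $k$ components out of $\Theta(n)$, which costs $n^{\Theta(k)}$ and gives \XP, not \FPT. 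Your "main obstacle" paragraph names the right difficulty, but the resolution you offer does not overcome it.

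The paper closes this gap with an ingredient your proposal is missing entirely: a Cauchy-interlacing argument (\Cref{Cauchy interlacing}) showing that deleting one vertex can at most halve the number of distinct eigenvalues, so a \textbf{\textsc{YES}}-instance has at most $(r+1)\cdot 2^{k}$ distinct eigenvalues to begin with (otherwise reject outright). This licenses a second branching phase: if the residual graph $H$ still has more than $r$ distinct eigenvalues, some eigenvalue $\lambda$ of $H$ must be absent from the spectrum of $H\setminus S$, and then \emph{every} component whose spectrum contains $\lambda$ must intersect $S$; so one guesses $\lambda$ (at most $(r+1)\cdot 2^{k}$ candidates, thanks to the interlacing bound) and branches over the at most $\Delta(G)^{r}$ vertices of any one such component. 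Iterating $k$ times yields $\bigl((r+1)2^{k}\bigr)^{k}\cdot\Delta(G)^{rk}=(r+1)^{k}\cdot 2^{k^2}\cdot\Delta(G)^{rk}$, which --- combined with the first phase --- is exactly where the factors $(r+1)^{2k}\cdot 2^{k^2}\cdot\Delta(G)^{rk}$ in the statement come from. In your write-up the $2^{k^2}$ is attributed to adjacency guessing among solution vertices, which plays no role in the actual argument; the eigenvalue-guessing step, and the interlacing lemma that makes it affordable, are the crux you would need to add.
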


\begin{proof} (Sketch) Let $(G,k)$ be an instance of $r$-EVD. We claim that if $G$ has more than $(r+1) \cdot 2^k$ eigenvalues, then $G$ is a NO instance, and we can detect this upfront. The intuition is that the Cauchy interlacing structure (\Cref{Cauchy interlacing}) allows us to conclude that one vertex can reduce the number of distinct eigenvalues in the spectrum by a factor of at most half: so if there are ``too many'' distinct eigenvalues in the spectrum to begin with, $k$ deletions will not suffice to reduce the number of distinct eigenvalues substantially enough. We now quantify this argument: suppose, for the sake of contradiction, that $G$ has more than $(r+1) \cdot 2^k$ eigenvalues, and let $S \subseteq V(G)$ be a subset of at most $k$ vertices such that $A_{G \setminus S}$ has at most $r$ distinct eigenvalues. Denote the vertices of $S$ by $v_1, v_2, \ldots, v_t$, where $t \leqslant k$. By~\Cref{Cauchy interlacing} applied to $G$, $G \setminus \{v_1\}$, we know that the number of distinct eigenvalues in $G \setminus \{v_1\}$ is at least $\lfloor \frac{1}{2} \eta_G \rfloor$, where $\eta_G$ is  the number of distinct eigenvalues in $G$. Applying this argument iteratively to $G \setminus \{v_1\}$ and $G \setminus \{v_1,v_2\}$ and so on, it is clear that the number of distinct eigenvalues in $G \setminus S$ is at least $\frac{\eta_G}{2^k}-1$, but if $\eta_G > (r+1) \cdot 2^k$, then we have a contradiction. 

So we assume that $G$ has at most $(r+1) \cdot 2^k$ eigenvalues in its spectrum. Note that if $G$ has a shortest path $P$ with at least $r$ edges then any solution $S$ must contain one of the vertices of $P$ (c.f.~\Cref{diameter bound}). This gives us a branching strategy that can be executed in $\mathcal{O}^\star((r+1)^k)$ time. Let $(H,k^\prime)$ be an instance at a leaf of some successful execution path of this branching algorithm. Note that $H$ is a subgraph of $G$ whose diameter is at most $r-1$ and $k^\prime \leqslant k$ is a residual budget. 
  
Let $C$ be a connected component of $H$. Note that $|C| \leqslant \bigl(\Delta(G)\bigr)^r$, in other words, $H$ is a collection of ``small'' components. Note that if the spectrum of $H$ has more than $(r+1) \cdot 2^{k^\prime}$ eigenvalues, we say NO as before. On the other hand, if the spectrum of $H$ has at most $r$ eigenvalues, then we are already done. So the spectrum of $H$ has more than $r$ and at most $(r+1) \cdot 2^{k^\prime}$ eigenvalues. Otherwise, for the sake of analysis, assume that $(H,k^\prime)$ is a \textbf{\textsc{YES}}-instance with solution $S$. Note that there is an eigenvalue $\lambda$ that belongs to the spectrum of $H$ but not to the spectrum of $H \setminus S$. Note that there is at least one connected component $C$ such that $\lambda$ belongs to the spectrum of $H[C]$. Therefore, $S \cap C \neq \emptyset$. Our algorithm proceeds by guessing $\lambda$ and a choice of vertex from $S \cap C$, both of which we can afford because the spectrum of $H$ and the sizes of the components of $H$ are bounded by $(r+1) \cdot 2^{k^\prime}$ and  $|C| \leqslant \bigl(\Delta(G)\bigr)^r$ respectively. 
\end{proof}

\section{Reducing eigenvalues by adding edges}

We show that the $2$-\textsc{Eigenvalue Edge Addition} is \textsf{NP}-complete even on cluster graphs, and demonstrate a quadratic kernel in the standard parameter. Also, for any fixed $r\geq 3$, we show that the $r$-\textsc{EEA} problem is NP-complete. 

For the first result, we reduce from $3$-\textsc{Partition} which is known to be strongly \textsf{NP}-complete \cite[see][]{lewis1983michael}. The input for $3$-\textsc{Partition} consists of a set $T=\{ s_1,\ldots, s_{3n}\}$ and $b$, where $s_i$'s are positive integers from $\big(\frac{b}{4},\frac{b}{2}\big)$, $s_i$'s and $b$ are given in unary, and $\sum_{i=1}^{3n}s_i=nb$. The goal of this problem is to decide whether there $T$ can be partitioned into $n$ triplets such that the elements of any triplet sum up to $b$. The intuition for the reduction is the following: the reduced instance is a disjoint union of cliques whose sizes are $\{ s_1,\ldots, s_{3n}\}$ and a large number of cliques of size $b$. The idea is that a solution to the $3$-Partition instance can guide the smaller cliques into appropriate mergers so that all cliques have size $b$, and the ``large'' number of cliques of size $b$, combined with an appropriately chosen budget, essentially forces this solution structure in the reverse direction, allowing us to derive a solution for $3$-Partition. 

\begin{restatable}{theorem}{twoEEAhard}
\label{2EEA NPhard}
$2$-\textsc{Eigenvalue Edge Addition} is \textsf{NP}-complete, even when restricted to cluster graphs, forests, and $2$-regular graphs.
\end{restatable}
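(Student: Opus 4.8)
The plan is to give a polynomial-time many-one reduction from $3$-\textsc{Partition}, exploiting the structure promised by~\Cref{2eval}: the target graph after adding edges must be a disjoint union of equal-sized cliques. Given an instance $(T = \{s_1, \ldots, s_{3n}\}, b)$ of $3$-\textsc{Partition}, I would construct a cluster graph $H$ consisting of one clique of size $s_i$ for each $i$, together with a ``large'' number $N$ of extra cliques each of size exactly $b$ (where $N$ is a polynomial in $n$ and $b$, which is fine since $3$-\textsc{Partition} is \emph{strongly} \NP-complete and the $s_i$ and $b$ are given in unary). The budget $k$ would be set to the number of edges one must add to merge the $3n$ small cliques into $n$ cliques of size exactly $b$ assuming a valid $3$-partition exists; concretely, if three cliques of sizes $a, a', a''$ with $a + a' + a'' = b$ are merged into one clique of size $b$, the number of added edges is $\binom{b}{2} - \binom{a}{2} - \binom{a'}{2} - \binom{a''}{2}$, so $k$ is the sum of these quantities over the $n$ intended triplets — a quantity that does not actually depend on \emph{which} valid partition is chosen, which is the key convenience.

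For the forward direction, given a valid partition of $T$ into $n$ triples each summing to $b$, I would merge the corresponding small cliques accordingly, add the edges needed to complete each merged block into a clique of size $b$, and leave the $N$ padding cliques untouched. The result is a disjoint union of cliques all of size $b$, which has at most two distinct eigenvalues by~\Cref{2eval}, and the number of added edges is exactly $k$, so $(H,k)$ is a \textbf{\textsc{YES}} instance of $2$-EEA.

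The reverse direction is where the real argument lies, and I expect it to be the main obstacle. Starting from a set $F$ of at most $k$ added edges such that $H \cup F$ is a disjoint union of equal-sized cliques of some common size $\ell$, I must first argue that $\ell = b$. The role of the $N$ padding cliques is to force this: since edge addition can only merge components (never split them), each resulting clique is a union of original components, so $\ell$ must be a multiple of $\gcd$-compatible sizes; more importantly, if $\ell > b$ then every one of the $N$ padding cliques of size $b$ would need to absorb at least $\binom{\ell}{2} - \binom{b}{2} \ge \binom{b+1}{2} - \binom{b}{2} = b$ additional edges each, driving the total cost well above $k$ once $N$ is chosen large enough; and $\ell < b$ is impossible since each $s_i > b/4$ cannot be reduced and cliques cannot be split, while a single clique of size $b$ already exceeds any $\ell < b$. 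Hence $\ell = b$. Once $\ell = b$ is pinned down, the padding cliques are already of the correct size and it is never beneficial to touch them, so the entire budget is spent merging the small cliques; a convexity argument on $\binom{\cdot}{2}$ shows that the cheapest way to assemble cliques of size exactly $b$ from the $s_i$ (each in $(b/4, b/2)$) is to group them into triples summing to $b$, and any grouping achieving cost at most $k$ must in fact be such a partition into triples. Reading off the triples yields a valid solution to $3$-\textsc{Partition}.

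Finally, I would verify that the same construction can be realized (or slightly modified) so that the input is a forest or a $2$-regular graph rather than a cluster graph, presumably by replacing each clique with a path (for forests) or a cycle (for $2$-regular graphs) of the appropriate size and re-deriving the merge costs; the eigenvalue characterization of~\Cref{2eval} still forces the target to be a union of equal cliques, so the combinatorial core of the argument carries over with adjusted budget bookkeeping. I expect the delicate point throughout to be choosing $N$ (and verifying the resulting $k$) large enough to rule out $\ell \ne b$ while keeping the instance size polynomial, which the unary encoding and strong \NP-completeness of $3$-\textsc{Partition} make possible.
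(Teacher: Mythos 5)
Your proposal is correct and follows essentially the same route as the paper: a reduction from $3$-\textsc{Partition} to a cluster graph consisting of the $s_i$-sized cliques plus polynomially many padding cliques of size $b$, with the padding cliques forcing the common clique size to be $b$ (via a double-counting argument on added edges) and the constraint $s_i \in (b/4, b/2)$ forcing each merger to involve exactly three small cliques; the paper uses the looser budget $nb^2$ and $M = 3nb$ dummy cliques where you use the exact merge cost, but this is immaterial. The extension to forests and $2$-regular graphs by substituting paths and cycles with adjusted budgets is also exactly the paper's approach.
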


\begin{proof}
  We describe the hardness for cluster graphs. Consider an instance, say $(T, b)$, of $3$-\textsc{Partition}, where  $T=\{s_1,\ldots, s_{3n}\}$ such that i) $\frac{b}{4}< s_i<\frac{b}{2}$ for all $1\leq i\leq 3n$, and ii) $\sum_{i=1}^{3n}s_i = nb$.
  
  Let us construct a graph, say $G$, as follows: For every $1\leq i\leq 3n$, introduce a clique, say $C_i$, of size $s_i$. Also, add $M:= 3nb$  cliques, each of size $b$; let us refer to them as \emph{dummy cliques}. The graph $G$ is the disjoint union of $3n+M$ cliques, namely $C_1,\ldots, C_{3n}$ and the $M$ dummy cliques. Let us show that $(T,b)$ is a \textbf{\textsc{YES}} instance of $3$-\textsc{Partition} if and only if $(G,nb^2)$ is a \textbf{\textsc{YES}} instance of $2$-\textsc{Eigenvalue Edge Addition}.
  
  \noindent ($\Rightarrow$) Suppose that $(T,b)$ is a \textbf{\textsc{YES}} instance of $3$-\textsc{Partition}. Then, there exists a partition of $T$ into $n$ triplets, say $T = T_1\uplus\ldots\uplus T_n$, such that for every $1\leq i\leq n$, the elements of $T_i$ add up to $b$. That is, $s_{x_i} + s_{y_i} + s_{z_i} = b$, where $s_{x_i}, s_{y_i}, s_{z_i}$ denote the three elements of $T_i$.
  
  For every $1\leq i\leq n$, merge the three cliques $C_{x_i}, C_{y_i}, C_{z_i}$ into one clique, say $D_i$, as follows:
  \begin{itemize}
  \item Make every vertex of $C_{x_i}$ adjacent to every vertex of $C_{y_i}$.
  \item Make every vertex of $C_{x_i}$ adjacent to every vertex of $C_{z_i}$.
  \item Make every vertex of $C_{y_i}$ adjacent to every vertex of $C_{z_i}$. 
  \end{itemize}
  See \Cref{merging cliques} for an illustration.

  Note that the number of edges so added to $G$ is

  
  \[\sum_{i=1}^n\big(s_{x_i}\cdot s_{y_i}+s_{x_i}\cdot s_{z_i}+s_{y_i}\cdot s_{z_i}\big)<\sum_{i=1}^{n}\bigg(3\cdot \frac{b}{2}\cdot\frac{b}{2}\bigg)<nb^2\]
  
  For each $1\leq i\leq n$, the size of the clique $D_i$ is $s_{x_i}+s_{y_i}+s_{z_i} = b$. The resulting graph, say $H$, is the disjoint union of $n+M$ cliques, each of size $b$, namely  $D_1,\ldots, D_n$ and the $M$ dummy cliques. The adjacency matrix of $H$ has two distinct eigenvalues, i.e., $-1$ and $b-1$. Thus, $(G,nb^2)$ is a \textbf{\textsc{YES}} instance of $2$-\textsc{Eigenvalue Edge Addition}.

  \noindent ($\Leftarrow$) Suppose that $(G,nb^2)$ is a \textbf{\textsc{YES}} instance of $2$-\textsc{Eigenvalue Edge Addition}. That is, there exists $S\subseteq {V(G) \choose 2} \setminus E(G)$ of size $\leq nb^2$ such that adding the edges of $S$ to $G$ results in a graph, say $H$, whose adjacency matrix has at most two distinct eigenvalues. Using Lemma \ref{2eval}, the graph $H$ is a disjoint union of equal-sized cliques. Observe that each clique of $H$ is formed by merging some of the $3n+M$ cliques of $G$, namely $C_1,\ldots, C_{3n}$ and the $M$ $b$-sized dummy cliques. 
  
  First, let us show that no dummy clique  participates in a merger. That is, in $H$, each of the $M$ $b$-sized dummy cliques of $G$ remains as it is. For the sake of contradiction, assume that there exists a dummy clique that merges with some other clique(s) of $G$ to form a bigger (i.e., of size $>b$) clique of $H$. Then, as all cliques of $H$ have the same size, none of the other $M-1~~$ $b$-sized dummy cliques of $G$ can remain as it is. Now, as each of the $M$ $b$-sized dummy cliques participates in some merger, it is incident to $\geq b$ edges of $S$.  Also, every edge of $S$ is incident to at most two dummy cliques. Therefore, we get $|S|\geq \frac{Mb}{2}>nb^2$, a contradiction.

\begin{figure}[H]
  \centering
  \includegraphics[scale=0.66]{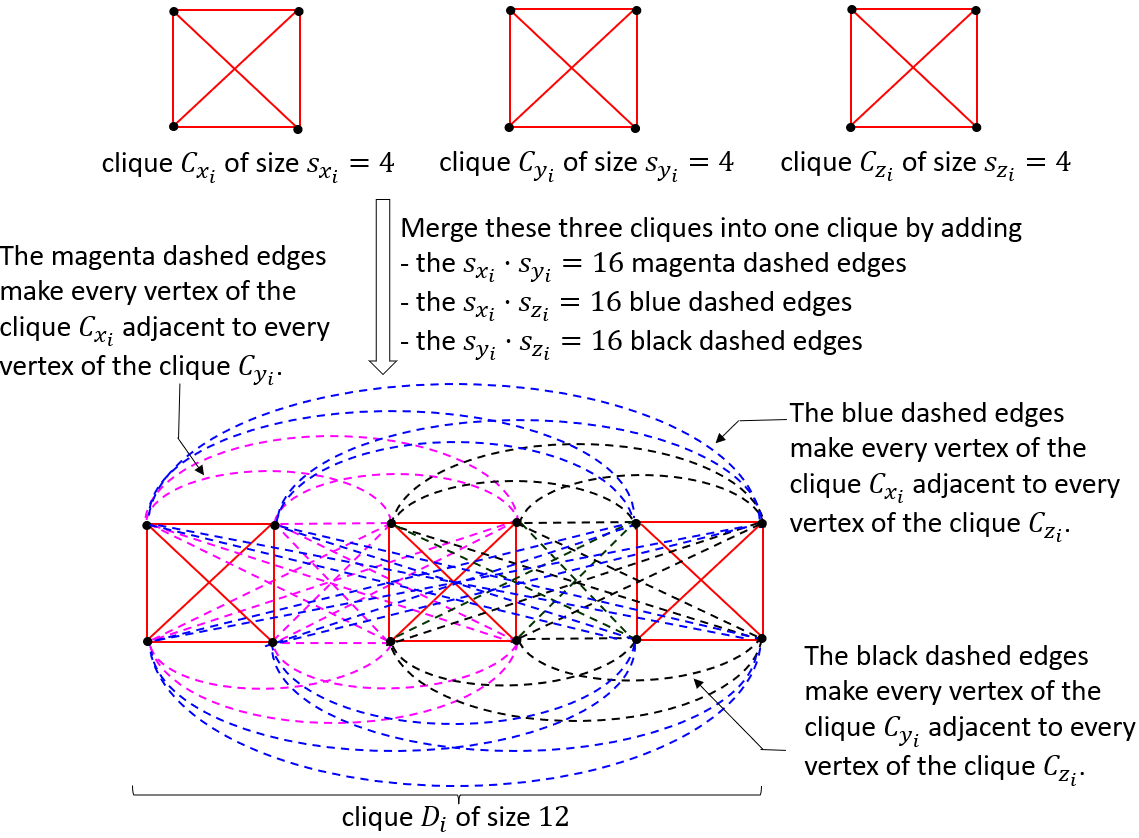}
  \caption{An illustration of the merger of the three cliques $C_{x_i}, C_{y_i}, C_{z_i}$, when $s_{x_i}=s_{y_i}=s_{z_i}=4$, in the forward direction in the proof of \Cref{2EEA NPhard}.}
  \label{merging cliques}
  \end{figure}
  
  Let $D_1, \ldots, D_t$ denote the equal-sized cliques of $H$ other than the $M$ dummy cliques. Note that their common size is the same as that of a dummy clique, i.e., $b$. Consider any $1\leq i\leq t$. The clique $D_i$ is formed by merging some (say $p_i$) of the $3n$ cliques $C_1,\ldots, C_{3n}$. Each of these $p_i$ cliques  has size $>\frac{b}{4}$ and $<\frac{b}{2}$. Also, their sizes add up to the size of the clique $D_i$, i.e., $b$. Therefore, we have $p_i\cdot \frac{b}{4} < b < p_i\cdot \frac{b}{2}$. So, we get $p_i=3$. Hence, each of the $t$ cliques $D_1,\ldots, D_t$ is obtained by merging three of the $3n$ cliques $C_1,\ldots, C_{3n}$. We have $t=n$. 
  
  Consider any $1\leq i\leq n$. Let $C_{x_i}, C_{y_i}, C_{z_i}$ denote the three cliques amongst $C_1,\ldots, C_{3n}$ whose merger forms the clique $D_i$. Let $T_i$ denote the triplet that consists of $s_{x_i}, s_{y_i}, s_{z_i}$. As the sizes of the cliques $C_{x_i}, C_{y_i}, C_{z_i}$ add up to the size of the clique $D_i$, we have $s_{x_i}+s_{y_i}+s_{z_i} = b$. That is, the elements of the triplet $T_i$ add up to $b$. Thus, as $T=T_1\uplus\ldots\uplus T_n$, it follows that $(T,b)$ is a \textbf{\textsc{YES}} instance of $3$-\textsc{Partition}. 
  
  In the reduction above, instead of adding a clique on $s_i$ vertices, we could instead add a cycle (resp. path) on $s_i$ vertices, and adjust the budget to account for the missing edges, thereby showing NP-completeness on $2$-regular graphs (resp. forests) as well.

\end{proof}

Our next result gives a quadratic kernel for $2$-\textsc{Eigenvalue Edge Addition}. Let $(G,k)$ be an instance of $2$-EEA. We only describe the main intuition of the kernel informally and defer a detailed argument to the appendix. Since we are only allowed to add edges, we ``might as well'' complete all the connected components of $G$ to cliques and adjust the budget accordingly. Thus, without loss of generality, $G$ is already a cluster graph. Some trivial cases are easily handled, such as: when we cannot afford to complete the original components of $G$ to cliques, or when we have no budget but cliques of different sizes, or when all cliques are already of the same size. 

Now, we are left with a situation where we have a non-trivial budget and cliques of at least two distinct sizes. Let the largest sized clique have $q$ vertices, and suppose we have $t$ cliques of size $p$ in $G$, denoted by $C_1, \ldots, C_t$, where $p < q$. Note that each of these cliques is merged into a larger clique after edges from any valid solution are added to $G$. In particular, if $S$ is a valid solution, at least $\frac{t \cdot p}{2}$ edges of $S$ are incident to vertices of $C_1 \cup \ldots \cup C_t$. Therefore, if $tp/2 > k$, we can say \textbf{\textsc{NO}}. This bounds the sizes of cliques with fewer than $q$ vertices. 

For the largest-sized cliques, note that if we have ``too many'' of them, then none of them are merged into a larger clique after edges from any valid solution are added to $G$. In particular, it can be shown that if there are $s$ cliques of size $q$, then if $sq > 2k$, then these cliques are untouched by any valid edge addition set of size at most $k$. This allows us to throw away most of them, preserving just enough to remember that the cliques must indeed remain untouched in any valid solution. This bounds the number of vertices among the largest sized clique.

Combining these arguments, the overall bound on the total number of vertices in the reduced instance turns out to be quadratic in $k$. 
\begin{restatable}{theorem}{twoEEAkernel}
\label{2EEA kernel}
$2$-\textsc{Eigenvalue Edge Addition} admits a kernel with $\mathcal{O}(k^2)$ vertices.
\end{restatable}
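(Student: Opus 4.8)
The plan is to design a short sequence of safe reduction rules that first normalise the instance to a cluster graph and then bound the number of surviving vertices. Throughout, by \Cref{2eval} a valid solution turns $G$ into a disjoint union of cliques of a common size $\ell$, and since we may only \emph{add} edges, any two vertices lying in the same connected component of $G$ must end up in the same clique of the target graph. Hence every ``missing'' edge inside a connected component of $G$ is forced to lie in every solution. The first rule therefore completes each connected component of $G$ to a clique, charging the number $c$ of added edges against the budget and replacing $(G,k)$ by $(G',k-c)$ (returning a canonical \textsc{No}-instance if $c>k$). After this rule we may assume $G$ is a cluster graph; write its clique sizes as a multiset, let $q$ denote the largest size, and let $s$ be the number of cliques of size $q$. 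Note that the common target size satisfies $\ell\geq q$, since any clique of maximum size $q$ is contained in some target clique.

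Next I would bound the cliques smaller than $q$. Every clique of size $p<q\leq\ell$ must be merged into a strictly larger clique, so each of its $p$ vertices gains at least one new incident edge; counting incidences, the edges of any solution incident to the $t_p$ cliques of size $p$ number at least $t_p p/2$. The corresponding rule returns \textsc{No} whenever $t_p p>2k$ for some $p<q$, after which each size class $p<q$ contributes at most $2k$ vertices.

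The delicate step, which I expect to be the main obstacle, is to bound both the largest size $q$ and the number $s$ of largest cliques. The key tool is a forcing argument: if $sq>2k$ then $\ell=q$, for otherwise every size-$q$ clique would have to grow, forcing at least $sq/2>k$ new incident edges. I would use this twice. First, to bound $q$: if $q>2k$ then $sq\geq q>2k$, so $\ell=q$ is forced; but then the cliques of size $<q$ total at most $2k<q$ vertices, so they cannot form a clique of size $q$ among themselves and cannot be absorbed into an existing size-$q$ clique without exceeding $\ell=q$. Thus the instance is a trivial \textsc{Yes}-instance when no smaller clique is present and a trivial \textsc{No}-instance otherwise, and we may assume $q\leq 2k$. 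Second, to bound $s$: when $sq>2k$ the size-$q$ cliques are untouched by every solution, and the answer depends only on whether the smaller cliques can be merged into groups of total size exactly $q$ within the budget --- a condition independent of $s$ beyond the inequality $sq>2k$ itself. Hence I would discard all but $\lceil (2k+1)/q\rceil$ of the size-$q$ cliques, which preserves both the forcing $sq>2k$ and the answer, while reducing the largest size class to $O(k)$ vertices.

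Finally, I would assemble the bound. With $q\leq 2k$ there are at most $q-1\leq 2k-1$ size classes below $q$, each contributing at most $2k$ vertices, for $O(k^2)$ vertices among the small cliques; the largest class contributes $O(k)$ vertices; and there are no cliques larger than $q$. Summing yields an equivalent instance on $\mathcal{O}(k^2)$ vertices, as claimed. The only parts requiring genuine care are the safety proofs of the two rules built on the forcing argument --- in particular, verifying that discarding the surplus largest cliques is answer-preserving, and confirming that every residual non-trivial instance indeed satisfies $q\leq 2k$.
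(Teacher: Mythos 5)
Your reduction rules track the paper's almost step for step: complete every component to a clique and charge the budget, dispose of the trivial cases, reject when a non-maximum size class is incident to more than $2k$ forced additions, and prune the maximum size class via the forcing argument that $sq>2k$ freezes the largest cliques. The one step whose justification does not hold up is your bound $q\le 2k$: you assert that after the small-class rule the cliques of size $<q$ \emph{together} contain at most $2k$ vertices, but that rule only bounds each size class separately by $2k$ vertices, and summed over up to $2k$ distinct sizes this is $\Theta(k^2)$ vertices, which can far exceed $q$. So ``they cannot form a clique of size $q$ among themselves'' does not follow as stated. The conclusion you want is nevertheless true and is needed (by you \emph{and} by the paper, whose Rule~4 keeps $(2k+1)/x_t$ largest cliques and only yields $O(k)$ vertices when $x_t=O(k)$, a point it leaves implicit): if $q>2k$ and at least two distinct clique sizes survive, the instance is a \textsc{No}-instance. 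To see this, split on the target size $\ell$. If $\ell>q$, every size-$q$ clique must grow, and each needs at least $q/2>k$ incident added edges, which your forcing argument already covers. If $\ell=q$, every smaller clique must be assembled into a $q$-clique out of pieces each of size at most $2k<q$, which costs ${q \choose 2}-\sum_i {p_i \choose 2}\ \ge\ \tfrac{q(q-1)}{2}-\tfrac{q(2k-1)}{2}=\tfrac{q(q-2k)}{2}\ \ge\ \tfrac{q}{2}>k$ added edges. With that repair your argument is complete and, modulo this explicit treatment of large $q$, coincides with the paper's kernelization.
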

\begin{proof}
Consider an instance, say $(G,k)$, of $2$-\textsc{Eigenvalue Edge Addition}. Owing to Lemma \ref{2eval}, our goal is to decide if we can add $\leq k$ edges to $G$ to get a disjoint union of equal-sized cliques. Let us apply the following reduction rules $\big($in the specified order$\big)$: 
  
  \noindent \emph{Reduction rule 1:} Suppose that there's a component, say $C$, of $G$, that is not a clique. Then, add the missing ${|V(C)| \choose 2}-|E(C)|$ edges to turn $C$ into a clique, and reduce the parameter $k$ by ${|V(C)| \choose 2}-|E(C)|$.
  
  After exhaustively applying Reduction rule 1, $G$ is a disjoint union of cliques; say, it consists of $n_1$ cliques of size $x_1$, $n_2$ cliques of size $x_2$, $\ldots \ldots$, $n_t$ cliques of size $x_t$, where $x_1<x_2<\ldots\ldots <x_t$.
  
  \noindent \emph{Reduction rule 2:}
  \begin{itemize}
  \item If $k<0$, then return \textbf{\textsc{NO}}.
  \item If $k\geq 0$ and $t=1$, then return \textbf{\textsc{YES}}.
  \item If $k=0$ and $t\geq 2$, then return \textbf{\textsc{NO}}.
  \end{itemize}
  After applying Reduction rule 2, we have $k\geq 1$ and $t\geq 2$.
  
  \noindent \emph{Reduction rule 3:} If there exists an $1\leq i\leq t-1$ such that $n_i\cdot x_i>2k$, then return \textbf{\textsc{NO}}.
  
  \noindent \emph{Safeness of Reduction rule 3:}\\
  Suppose that $(G,k)$ is a \textbf{\textsc{YES}} instance. Then, there exists $S\subseteq {V(G) \choose 2} \setminus E(G)$ of size $\leq k$ such that adding the edges of $S$ to $G$ results in a disjoint union of equal-sized (say, of size $x$) cliques. Observe that each of these $x$-sized cliques is obtained by merging some cliques of $G$. Note that $x$ is at least the size of a largest clique in $G$. That is, we have $x\geq x_t$. Also, each of the smaller cliques of $G$, i.e., those of sizes $x_1, \ldots, x_{t-1}$, must participate in some merger.
  
  Now, consider any $1\leq i\leq t-1$. Each of the $n_i$ cliques of size $x_i$ is incident to $\geq x_i$ edges of $S$, for it must participate in some merger. Also, any edge of $S$ is incident to at most two of these $n_i$ cliques. Therefore, $|S|\geq \frac{n_i\cdot x_i}{2}$. So, as $|S|\leq k$, we get $n_i\cdot x_i\leq 2k$. Thus, Reduction rule 3 is safe.
  
  After applying Reduction rule 3, we have $n_i\cdot x_i\leq 2k$ for all $1\leq i\leq t-1$. Also, as $x_1,\ldots,x_{t-1}$ are $t-1$ distinct integers in the interval $[1,2k]$, we get $t-1\leq 2k$.
  
  \noindent \emph{Reduction rule 4:} Suppose that $n_t\cdot x_t>2k$. Then, remove all but $\frac{2k+1}{x_t}$ cliques of size $x_t$ from $G$.
  
  \noindent \emph{Safeness of Reduction rule 4:}\\
  If $n_t\cdot x_t>2k$, then in any solution, none of the $n_t$ cliques of size $x_t$ participate in a merger. That is, each of them remains as is after the edge additions, and each merger $\big($involving the remaining cliques, i.e., those of sizes  $x_1, \ldots, x_{t-1}\big)$ results in an $x_t$-sized clique. This is because if any clique of size $x_t$ gets to participate in a merger, then each of the remaining $n_t-1$ cliques of size $x_t$ must also participate in some merger $\big($because all cliques have the same size after the edge additions$\big)$, thereby needing  $\geq \frac{n_t\cdot x_t}{2}>k$ edge additions. 
  
  Also, we have $n_t\cdot x_t>2k$ before, as well as after, applying Reduction rule 4. Therefore, it follows that any solution before applying RR4 remains a solution after applying Reduction rule 4, and vice versa. Thus, Reduction rule 4 is safe.
  
  If Reduction rule 4 wasn't invoked, then $n_t\cdot x_t \leq 2k$; otherwise, after applying Reduction rule 4, we get $n_t\cdot x_t = 2k+1$. 
  
  Finally, the number of vertices in $G$ is at most \[n_1\cdot x_1+\ldots\ldots + n_{t-1}\cdot x_{t-1} + n_t\cdot x_t\leq (t-1)\cdot 2k +  (2k+1)\leq 4k^2+2k+1.\]
  
  This concludes the proof of Theorem \ref{2EEA kernel}. 
\end{proof}
Next, we show that $r$-\textsc{EEA} is NP-complete for every fixed $r\geq 3$.

\begin{restatable}{theorem}{rEEAhard}
\label{rEEA NP hardness}
Let $r\geq 3$ be an integer. Then, $r$-\textsc{Eigenvalue Edge Addition} is NP-complete.
\end{restatable}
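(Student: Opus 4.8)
The plan is to reduce from $2$-\textsc{Eigenvalue Edge Addition}, which we have just shown to be \textsf{NP}-complete (\Cref{2EEA NPhard}); membership in \textsf{NP} is immediate since the spectrum of a graph can be computed in polynomial time. Given a $2$-EEA instance $(G_0,k)$, I would build an $r$-EEA instance $(G,k)$ by taking $G := G_0 \uplus A_1 \uplus \cdots \uplus A_{r-2}$, where each \emph{anchor} $A_j$ is a disjoint clique on $a_j$ vertices. The anchor sizes are chosen to be distinct, each larger than $|V(G_0)|+k$, and with consecutive sizes differing by more than $2k$; since $r$ is fixed, this adds only polynomially many vertices, so the reduction runs in polynomial time. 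The budget is left unchanged at $k$. The intuition is that the $r-2$ anchors are ``too large to touch'' and contribute $r-2$ rigid, widely separated large eigenvalues, so that achieving at most $r$ distinct eigenvalues in total forces the remainder of the graph to collapse to at most two distinct eigenvalues --- which by \Cref{2eval} is exactly the $2$-EEA condition on the $G_0$-part.

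For the forward direction, a $2$-EEA solution $F$ for $(G_0,k)$ turns $G_0$ into a disjoint union of equal-sized cliques, say of size $x \leq |V(G_0)|$, whose spectrum is $\{-1,\, x-1\}$. Adding the same edges to $G$ leaves the anchors untouched, so the resulting graph is a disjoint union of cliques of sizes $x$ and $a_1, \ldots, a_{r-2}$; as $x-1 < |V(G_0)| < a_1 - 1$ and the $a_j$ are distinct, its spectrum is exactly $\{-1,\, x-1,\, a_1-1,\, \ldots,\, a_{r-2}-1\}$, i.e.\ precisely $r$ distinct eigenvalues, using at most $k$ edges.

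For the reverse direction, let $F$ be a valid $r$-EEA solution, so the edited graph $H$ has at most $r$ distinct eigenvalues and $|F|\le k$. Since we only \emph{add} edges, each $A_j$ survives as an \emph{induced} copy of $K_{a_j}$ in $H$, and adding $|F|\le k$ edges is a symmetric perturbation of spectral norm at most $k$; by standard eigenvalue stability (Weyl's inequality) together with \Cref{Cauchy interlacing}, the anchor ``tops'' $a_{r-2}-1 > \cdots > a_1-1$ move by at most $k$. Because they were pairwise separated by more than $2k$ and exceed every other eigenvalue of $G$ by more than $k$ (all of which lie below $|V(G_0)|$), the $r-2$ largest eigenvalues of $H$ remain $r-2$ distinct values that are separated from the rest of the spectrum. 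Hence the remaining eigenvalues of $H$ can take at most two distinct values, and I would use this --- mirroring the analysis in \Cref{2EEA NPhard} --- to conclude that the non-anchor part of $H$ is a disjoint union of equal-sized cliques, yielding a $2$-EEA solution for $(G_0,k)$.

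The main obstacle is precisely that, unlike the $r=2$ case, the edited graph need not be a disjoint union of cliques: \Cref{2eval} characterises only the two-eigenvalue situation, so the structure of a solution cannot be read off globally. The delicate step is ruling out solutions that spend a few edges joining an anchor to the rest of the graph --- such edges are individually cheap yet keep $H$ connected through a gigantic clique, and one must verify that they cannot help drive the distinct-eigenvalue count below the value forced by the separated anchors. This is where choosing the anchor sizes both enormous and widely spaced (relative to the budget $k$), and controlling them via Weyl's inequality and \Cref{Cauchy interlacing}, is essential; making this localisation argument fully rigorous for every fixed $r\ge 3$ is the crux of the proof.
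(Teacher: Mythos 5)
Your forward direction is fine, and reducing from $2$-EEA with large ``anchor'' cliques is a reasonable instinct, but the reverse direction has a genuine gap exactly at the point you flag as the crux: you never actually rule out solutions that add edges incident to an anchor. With a single anchor of each size, one cheap edge from $A_1$ into the $G_0$-part merges them into one connected component, and then ``the non-anchor part of $H$'' is no longer a union of components, so its spectrum is not a sub(multi)set of the spectrum of $H$ and \Cref{2eval} cannot be applied to it. The Weyl/interlacing argument you invoke only shows that the top $r-2$ eigenvalues of $H$ stay within $\pm k$ of $a_j-1$ and remain pairwise distinct and separated from the rest; that is an \emph{approximate} statement, and it is entirely consistent with $H$ having at most $r$ distinct eigenvalues while not being a disjoint union of cliques at all. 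To extract structure you need the spectrum of $H$ \emph{exactly}, not up to perturbation, and nothing in the proposal delivers that.

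The paper sidesteps this by a different design: it reduces from $3$-\textsc{Partition} and, crucially, plants $2nb^2+1$ \emph{copies} of each dummy clique size ($b$ and $L, L+1, \ldots, L+(r-3)$) against a budget of $nb^2$, so a counting argument guarantees that at least one clique of each size survives completely untouched. This pins down all $r$ eigenvalues of $H$ exactly, including $-1$ as the smallest, at which point \Cref{smallest ev -1} (smallest eigenvalue $-1$ implies disjoint union of cliques) recovers the global clique structure, and the size constraints finish the argument. If you want to salvage your reduction from $2$-EEA, the natural fix is the same one: use at least $2k+1$ vertex-disjoint copies of each anchor size (still polynomial for fixed $r$), argue that one copy of each size survives untouched so the exact eigenvalue set of $H$ is forced to be $\{-1\}$ together with one small value and the $r-2$ anchor values, and then invoke \Cref{smallest ev -1} rather than Weyl's inequality; you would still need a separate counting step showing no anchor participates in any merger.
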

\begin{proof}
Consider an instance, say $(T,b)$, of $3$-\textsc{Partition}, where $T=\{s_1,\ldots, s_{3n}\}$. Construct a graph, say $G$, as follows: For every $1\leq i\leq 3n$, introduce a clique, say $C_i$, of size $s_i$. Add $2nb^2+1$ cliques, each of size $b$. Also, for every $0\leq i\leq r-3$, add $2nb^2+1$ cliques, each of size $L+i$, where $L:=6nb^3$. Let us show that $(T,b)$ is a YES instance of $3$-\textsc{Partition} if and only if $(G,nb^2)$ is a YES instance of $r$-\textsc{Eigenvalue Edge Addition}.

($\Rightarrow$) Suppose that $(T,b)$ is a \textbf{\textsc{YES}} instance of $3$-\textsc{Partition}. Then, as described in the proof of Theorem~\ref{2EEA NPhard}, we add $\leq nb^2$ edges to merge the cliques $C_1,\ldots,C_{3n}$ using the $3$-\textsc{Partition} solution. The resulting graph, say $H$, is the disjoint union of i) $n+2nb^2+1$ cliques of size $b$ (each contributing eigenvalues $-1$ and $b-1$), and ii) $2nb^2+1$ cliques of size $L+i$ (each contributing eigenvalues $-1$ and $L+i-1$) for each $0\leq i\leq r-3$. So, the adjacency matrix of $H$ has $r$ distinct eigenvalues, namely $-1,b-1, L-1, L, L+1,\ldots, L+(r-4)$. Thus, $(G,nb^2)$ is a \textbf{\textsc{YES}} instance of $r$-\textsc{Eigenvalue Edge Addition}.

($\Leftarrow$) Suppose that $(G,nb^2)$ is a \textbf{\textsc{YES}} instance of $r$-\textsc{Eigenvalue Edge Addition}. That is, there exists $S\subseteq \binom{V(G)}{2}\setminus E(G)$ of size $\leq nb^2$ such that adding the edges of $S$ to $G$ results in a graph, say $H$, whose adjacency matrix has at most $r$ distinct eigenvalues. Note that any edge of $S$ is incident to at most two of the $2nb^2+1$ cliques of size $b$. So, as $|S|\leq nb^2$, there exists a $b$-sized clique that is not incident to any edge of $S$; this clique survives as a component in $H$, contributing eigenvalues $-1$ and $b-1$. Similarly, for each $0\leq i\leq r-3$, at least one clique of size $L+i$ survives as a component in $H$, contributing eigenvalues $-1$ and $L+i-1$. So, the $r$ distinct eigenvalues of the adjacency matrix of $H$ are $-1, b-1, L-1, L, L+1, \ldots, L+(r-4)$. Now, using Lemma~\ref{smallest ev -1}, it is clear that the graph $H$ must be a disjoint union of some cliques, whose sizes are $b, L, L+1, \ldots, L+(r-3)$.

Note that any clique of size $\geq L$ would need at least $L>nb^2$ edges to participate in a merger. So, as $|S|\leq nb^2$, all cliques of sizes $L, L+1, \ldots, L+(r-3)$ must remain intact in $H$. So, the remaining cliques (i.e., $C_1,\ldots C_{3n}$ and the $2nb^2+1$ cliques of size $b$) of $G$ must merge to give some cliques of sizes $b, L, L+1, \ldots, L+(r-3)$. However, note that all these cliques together contain $s_1+\ldots+s_{3n}+b\cdot(2nb^2+1)=nb+ b\cdot (2nb^2+1)<L$ vertices. Thus, their merger would only give $b$-sized cliques; that is, no clique of size $L, L+1, \ldots,$ or $L+(r-3)$ is produced by such mergers. Now, as described in the proof of Theorem~\ref{2EEA NPhard}, each such $b$-sized clique must be obtained by merging exactly three cliques amongst $C_1,\ldots, C_{3n}$, thereby showing that $(T,b)$ is a \textbf{\textsc{YES}} instance of $3$-\textsc{Partition}. 
\end{proof}

\section{Reducing eigenvalues by deleting edges}

In this section, we consider the $r$-\textsc{Eigenvalue Edge Deletion} problem. We defer the \textsf{NP}-completeness of $2$-EED to the proof of \Cref{2EEE NP hard}, where the hardness is implicit. In this section, we present an $\mathcal{O}^{*}(2^k)$-time \textsf{FPT} algorithm for $2$-EED and show that it can be solved in polynomial time on triangle-free graphs. Finally, we prove that $r$-EED is \textsf{NP}-complete for any fixed $r \geq 3$.

The \textsf{FPT} algorithm is similar in spirit to the one we use in the proof of~\Cref{2EVD fpt}: we branch on induced paths of length three, except we now have a choice of two edges instead of three vertices. In particular, if $P$ is an induced path on $\{a,b,c\}$ with edges $\{a,b\}$ and $\{b,c\}$, we recursively solve the instances $(G\setminus \{a,b\},k-1)$ and $(G \setminus \{b,c\},k-1)$. 

At the leaves of successful execution paths of this branching algorithm, as before, we have cluster graphs where the cliques are not necessarily of the same size, and a residual budget. Let $(H,k^\prime)$ denote such an instance, where $H$ is a subgraph of $G$ consisting of $t$ cliques of sizes $s_1, \ldots, s_t$, and $k^\prime \leqslant k$ is the residual budget. Note that if $S$ is such that $G \setminus S$ is a collection of $x$-sized cliques for some $x$, then $x$ must divide each $s_i$. We show that for an optimal choice of $S$, $x$ is the GCD of the $s_i$'s. Based on this, it is straightforward to check if the residual budget is sufficient or not. 


\begin{restatable}{theorem}{twoEEDfpt}
\label{2EED fpt}
$2$-\textsc{Eigenvalue Edge Deletion} admits an algorithm with running time $\mathcal{O}^{*}(2^k)$.
\end{restatable}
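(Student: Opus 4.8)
The plan is to mirror the branching strategy of \Cref{2EVD fpt}, exploiting the fact (\Cref{2eval}) that the target graphs are exactly disjoint unions of equal-sized cliques, together with the standard observation that a graph is a disjoint union of cliques precisely when it contains no induced path on three vertices. First I would repeatedly search for an induced $P_3$ on vertices $\{a,b,c\}$ with edges $\{a,b\}$ and $\{b,c\}$ but no edge $\{a,c\}$; this takes polynomial time. Since any edge-deletion set that turns $G$ into a cluster graph must destroy this $P_3$, it must delete at least one of $\{a,b\}$ and $\{b,c\}$. I would therefore branch into the two instances $(G\setminus\{a,b\},\,k-1)$ and $(G\setminus\{b,c\},\,k-1)$, where $\setminus$ denotes edge removal. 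The recursion halts when the current graph is $P_3$-free, i.e.\ already a cluster graph. As each internal node has two children and the depth is at most $k$, the search tree has $\mathcal{O}(2^k)$ leaves.

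To argue correctness of the branching, I would maintain the invariant that, for any hypothetical solution $F$ with $|F|\leq k$, there is a root-to-leaf path along which the set $F_{\mathrm{branch}}$ of edges deleted by the branching satisfies $F_{\mathrm{branch}}\subseteq F$: at each branching step on a $P_3$, the set $F$ contains one of its two edges, and I follow that branch. At the corresponding leaf $(H,k')$, where $H=G\setminus F_{\mathrm{branch}}$ is a cluster graph and $k'=k-|F_{\mathrm{branch}}|$, the residual set $F\setminus F_{\mathrm{branch}}$ witnesses that $H$ can be turned into a disjoint union of equal-sized cliques using at most $k'$ further edge deletions. Hence it suffices to solve, at every leaf, the following problem \emph{exactly} in polynomial time: what is the minimum number of edge deletions needed to make the cluster graph $H$ into a disjoint union of equal-sized cliques?

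For the leaf computation, suppose $H$ consists of $t$ cliques of sizes $s_1,\ldots,s_t$. Since edge deletions can never create edges between distinct components, the only admissible operation is to split each clique internally; thus if the target common size is $x$, then $x$ must divide every $s_i$, equivalently $x$ divides $g:=\gcd(s_1,\ldots,s_t)$. Splitting a clique of size $s_i$ into $s_i/x$ cliques of size $x$ costs exactly $\binom{s_i}{2}-\frac{s_i}{x}\binom{x}{2}=\frac{1}{2}s_i(s_i-x)$ deletions, so the total cost for a feasible target size $x$ is $\frac{1}{2}\sum_{i=1}^{t}s_i(s_i-x)$. This quantity is strictly decreasing in $x$, so the optimal feasible target is the largest admissible value, namely $x=g$. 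I would therefore compute $g$, evaluate the cost, and return \textsc{Yes} at this leaf if and only if this cost is at most $k'$. Returning \textsc{Yes} overall exactly when some leaf does, the invariant above gives soundness and completeness, and since polynomial work is spent per node, the running time is $\mathcal{O}^{*}(2^k)$.

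I expect the only delicate points to be the two correctness claims rather than any heavy calculation: first, that the branching is exhaustive in the sense stated, which rests entirely on the $P_3$-free characterization of cluster graphs and the subset invariant $F_{\mathrm{branch}}\subseteq F$; and second, the optimality of the gcd choice at the leaves, which reduces to observing that edge deletions can only refine existing cliques (never merge components) and to the monotonicity of the cost in $x$. Neither step requires interlacing or spectral arguments beyond the initial appeal to \Cref{2eval}, which is what makes the single-exponential bound clean.
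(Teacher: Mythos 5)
Your proposal is correct and follows essentially the same route as the paper's proof: branch on an induced $P_3$ by deleting one of its two edges (a binary branching of depth at most $k$), and at each cluster-graph leaf observe that the common target size $x$ must divide every clique size, that splitting a clique of size $s_i$ costs $\frac{1}{2}s_i(s_i-x)$ deletions, and that the cost is minimized at $x=\gcd(s_1,\ldots,s_t)$. The explicit subset invariant you maintain for correctness is a slightly more careful formulation of what the paper leaves implicit, but the argument is the same.
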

\begin{proof}
Let us describe a recursive branching algorithm. Consider an instance, say $(G, k)$, of $2$-\textsc{Eigenvalue Edge Deletion}. Owing to Lemma \ref{2eval}, our goal is to decide whether we can delete at most $k$ edges from $G$ to get a disjoint union of equal-sized cliques. First, we check if $G$ has an induced path on three vertices. This takes polynomial time. 
  
  \noindent \emph{Case 1: $G$ has no induced path on three vertices:}\\
  The graph $G$ is a disjoint union of cliques, say $C_1, \ldots, C_t$, of sizes $s_1,\ldots, s_t$ respectively. Observe that deleting the edges of any solution breaks each of these $t$ cliques into equal-sized cliques $\big($say, of size $x\big)$. That is, for every $1\leq i\leq t$, it breaks the clique $C_i$ into $\frac{s_i}{x}$ cliques, each of size $x$. As each of these $\frac{s_i}{x}$ cliques has ${x \choose 2}$ edges, the number of edges deleted from the clique $C_i$ is $${s_i \choose 2}- \frac{s_i}{x} {x \choose 2} = \frac{s_i\big(s_i-x\big)}{2}$$
  So, larger $x$ corresponds to smaller solutions, i.e., fewer edge deletions. Also, $x$ must divide each of $s_1, \ldots, s_t$. Therefore, for any minimum-sized solution, we have $x = gcd(s_1,\ldots, s_t)$, and its size is
  $$\sum_{i=1}^{t} \frac{s_i\big(s_i-gcd(s_1,\ldots, s_t)\big)}{2}$$
  If this size is at most $k$, we return \textbf{\textsc{YES}}; otherwise, we return \textbf{\textsc{NO}}. This takes polynomial time.
  
  See \Cref{breaking cliques} for an example.
  \begin{figure}[H]
   \thinspace
   \thinspace
   \thinspace
   \thinspace
   \thinspace
   \thinspace
   \thinspace
   \thinspace
   \thinspace
   \thinspace
   \thinspace
   \thinspace
   \thinspace
   \thinspace
   \thinspace \includegraphics[scale=0.66]{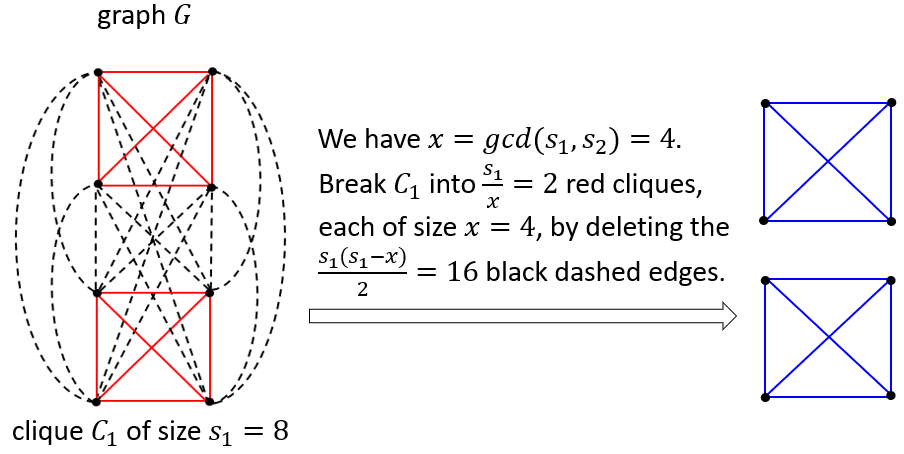}
  \includegraphics[scale=0.66]{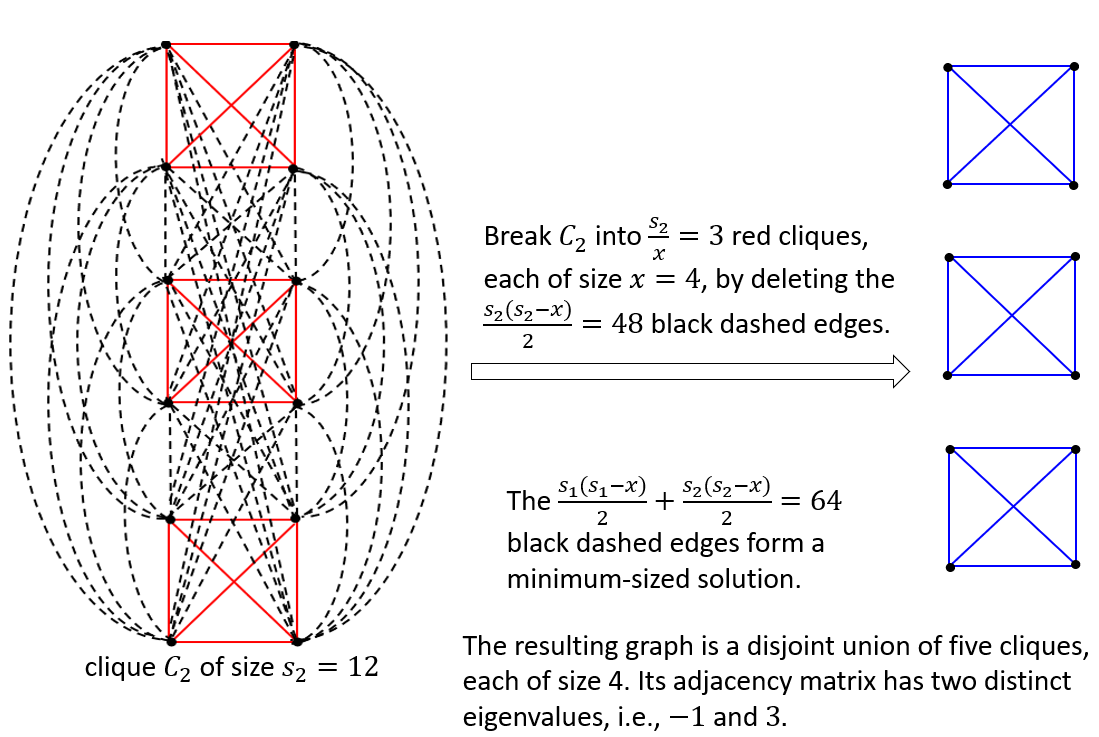}
  \caption{An example illustrating the breaking of cliques in \Cref{2EED fpt}.}
  \label{breaking cliques}
  \end{figure}

\noindent \emph{Case 2: $G$ has an induced path on three vertices, say $a-b-c$:}\\
Note that any solution must pick at least one of its two edges, i.e., $\{a, b\}$ and $\{b, c\}$. So, if $k=0$, we return NO; otherwise, we guess an edge that is picked into solution. That is, we branch as follows: In the first $\big($resp. second$\big)$ branch, we include the edge $\{a,b\}$ $\big($resp. $\{b,c\}\big)$ into solution, remove it from $G$, and reduce the parameter $k$ by $1$. It takes polynomial time to create the sub-problems $\big(G - \{a, b\}, k-1\big)$ and $\big(G - \{b, c\}, k-1\big)$. Next, we run our algorithm on these two instances. If at least one of these two recursive calls returns \textbf{\textsc{YES}}, so do we; otherwise, we return \textbf{\textsc{NO}}.
  
  The depth of our search tree is at most $k$. Also, each of its internal nodes has two children. Therefore, it has at most $\mathcal{O}(2^k)$ nodes. Thus, as we spend polynomial time at each node, the overall running time is at most $\mathcal{O}^{\star}(2^k)$.
  
  This concludes the proof of Theorem \ref{2EED fpt}.
\end{proof}

Our next claim takes advantage of the fact that the sizes of the cliques after the removal of any solution is at most two when the input graph is triangle-free and we are only allowed to delete edges. Therefore, the value of the optimal solution is $|E(G)| - |V(G)|/2$ if $G$ has a perfect matching and $|E(G)|$ otherwise. The result follows from the fact that the existence of a perfect matching can be determined in polynomial time \cite{maximumMatching}.

\begin{restatable}[]{proposition}{EEDtrianglefree}
  \label{EED trianglefree}
$2$-\textsc{Eigenvalue Edge Deletion} is polynomial time solvable on triangle-free graphs.
\end{restatable}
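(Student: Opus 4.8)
The plan is to combine the structural characterization from \Cref{2eval} with the observation that triangle-freeness severely restricts the sizes of the cliques we are able to form. By \Cref{2eval}, a valid solution must delete edges so that the residual graph is a disjoint union of equal-sized cliques. Since $G$ is triangle-free and deleting edges can never create a new triangle, every clique in the residual graph has at most two vertices. Hence the only two possible target structures are (i) a disjoint union of $1$-vertex cliques, i.e.\ an edgeless graph, or (ii) a disjoint union of $2$-vertex cliques, i.e.\ a perfect matching of $G$.

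First I would handle target (i): to make $G$ edgeless we must delete every edge, incurring a cost of exactly $|E(G)|$, and this is always achievable. Next I would handle target (ii): a spanning disjoint union of $2$-cliques is precisely a perfect matching of $G$, so this target is achievable if and only if $G$ admits a perfect matching. When it does, retaining the $|V(G)|/2$ matching edges and deleting all the others costs exactly $|E(G)| - |V(G)|/2$; moreover, since every vertex must be saturated by some retained edge in any solution of this type, no cheaper solution exists.

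Since $|E(G)| - |V(G)|/2 \leq |E(G)|$ always holds, the optimal solution value is $|E(G)| - |V(G)|/2$ whenever $G$ has a perfect matching, and $|E(G)|$ otherwise. I would therefore compute both quantities, decide whether $G$ has a perfect matching using the polynomial-time algorithm of \cite{maximumMatching}, and return \textbf{\textsc{YES}} if and only if the resulting optimum is at most $k$. Note that a parity obstruction is automatically subsumed here: if $|V(G)|$ is odd then no perfect matching exists, and only option (i) remains available.

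There is no genuine obstacle once the ``size at most two'' observation is in place; the reduction to the two target shapes and the perfect-matching test are routine. The only mild subtlety I would make explicit is that a \emph{single} perfect matching already realizes the minimum deletion cost for target (ii), so there is no need to search over distinct matchings --- this follows because every perfect matching of $G$ has the same cardinality $|V(G)|/2$.
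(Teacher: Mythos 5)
Your proposal is correct and follows essentially the same route as the paper: both arguments observe that triangle-freeness (preserved under edge deletion) forces the residual equal-sized cliques to have size at most two, so the optimum is $|E(G)| - |V(G)|/2$ if $G$ has a perfect matching and $|E(G)|$ otherwise, decidable in polynomial time via a matching algorithm. No gaps.
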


Now, we show that $r$-EED is \textsf{NP}-complete by reducing it from \textsc{Partition into Triangles} on graphs of clique number 3 which is known to be \textsf{NP}-complete \cite{custic2015geometric}. The input for \textsc{Partition into Triangles} is a graph $G$, and the goal is to decide whether $V(G)$ can be partitioned into $\frac{|V(G)|}{3}$ triplets such that every triplet induces a triangle in $G$.

\begin{restatable}{theorem}{reedhard}
\label{rEED NP hard} 
Let $r\geq 3$ be an integer. Then, $r$-\textsc{Eigenvalue Edge Deletion} is \textsf{NP}-complete. 
\end{restatable}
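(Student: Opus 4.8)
The plan is to establish membership in \textsf{NP} and then reduce from \textsc{Partition into Triangles} on graphs of clique number three. Membership is immediate: given a candidate edge set $F$ with $|F| \le k$, we delete it and compute the spectrum of $H \setminus F$ in polynomial time (as noted in the preliminaries), checking that it has at most $r$ distinct eigenvalues. For the hardness, given an instance $G$ with $n := |V(G)|$ (we may assume $3 \mid n$) and $m := |E(G)|$, I would build $H$ as the disjoint union of $G$ with a family of \emph{gadget cliques}: for each $i \in \{0, \ldots, r-3\}$ I add $k+1$ vertex-disjoint copies of $K_{L+i}$, where $L$ is chosen larger than $n$, and I set the budget to $k := m - n$. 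The point of choosing $L > n$ is that every eigenvalue produced by any subgraph of $G$ has absolute value at most $n-1 < L-1$, so the large eigenvalues contributed by the gadget can never be matched by the $G$-part. An intact copy of $K_{L+i}$ contributes the eigenvalues $-1$ and $L+i-1$, so the gadget as a whole supplies the $r-1$ distinct values $-1, L-1, L, \ldots, L+r-4$.

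For the forward direction, if $G$ admits a partition into triangles, I delete exactly the $m-n$ edges of $G$ not used by the partition, leaving every gadget clique untouched. The $G$-part then becomes a disjoint union of triangles, contributing the eigenvalues $-1$ and $2$, and $H \setminus F$ is a disjoint union of cliques of sizes $3, L, L+1, \ldots, L+r-3$; this has exactly $r$ distinct eigenvalues, so $(H,k)$ is a \textbf{\textsc{YES}} instance.

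The reverse direction carries the weight of the argument. Given any solution $F$ with $|F| \le k$, I first observe that each of the $r-2$ gadget sizes occurs in $k+1$ disjoint copies and each deleted edge lies inside at most one copy; hence by pigeonhole at least one copy of every size survives intact, and these already realize the $r-1$ values listed above. Consequently $H \setminus F$ is permitted at most one further distinct eigenvalue. Since every eigenvalue of the $G$-part has absolute value below $L-1$, while all gadget eigenvalues other than $-1$ are at least $L-1$, the eigenvalues of the $G$-part must lie in $\{-1, \lambda\}$ for a single additional value $\lambda$; that is, the $G$-part has at most two distinct eigenvalues. By \Cref{2eval} it is therefore a disjoint union of equal-sized cliques, of common size $s$. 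As the $G$-part is a subgraph of the clique-number-three graph $G$, we have $s \le 3$, and $s \mid n$. A short budget count then forces $s = 3$: reducing the $G$-part to isolated vertices ($s=1$) would cost all $m$ edge deletions, and reducing it to a perfect matching ($s=2$) would cost $m - n/2$, both exceeding $k = m-n$. Hence $s = 3$, the $G$-part is precisely a partition of $V(G)$ into triangles, all $m-n$ deletions are spent inside $G$, and $G$ is a \textbf{\textsc{YES}} instance of \textsc{Partition into Triangles}.

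I expect the principal obstacle to be cleanly confining the $G$-part to at most two eigenvalues, which is exactly the step that lets \Cref{2eval} take over. This confinement relies on two design choices working in tandem: taking $L$ large enough that the gadget's nontrivial eigenvalues sit strictly above anything a subgraph of $G$ can produce, and including $k+1$ copies of each gadget size so that the $r-1$ ``anchor'' eigenvalues are guaranteed to be present no matter how $F$ is distributed over the gadget. Once these are in place, the clique-number and budget constraints dispose of the cases $s \in \{1,2\}$ routinely.
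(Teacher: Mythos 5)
Your proof is correct, and it follows the same reduction skeleton as the paper: reduce from \textsc{Partition into Triangles} on clique-number-three graphs with budget $m-n$, pad with many vertex-disjoint copies of anchor cliques, and use pigeonhole to guarantee that intact anchors pin down the spectrum. Where you diverge is in the calibration of the gadget and the lemma that closes the argument. The paper uses $m-n+1$ dummy cliques of each size $3,4,\ldots,r+1$, so the survivors already realize all $r$ eigenvalues $-1,2,3,\ldots,r$ (the size-$3$ dummies deliberately share the triangle eigenvalue $2$); it then invokes \Cref{smallest ev -1} (smallest eigenvalue $-1$ implies a disjoint union of cliques) on the whole of $H\setminus S$, after which the admissible clique sizes $\{3,\ldots,r+1\}$ together with clique number $3$ immediately force a triangle partition, with no budget counting needed beyond the pigeonhole. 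You instead use sizes $L,\ldots,L+r-3$ with $L>n$, leaving one spare eigenvalue slot, and exploit the spectral separation (every eigenvalue of a subgraph of $G$ has absolute value at most $n-1<L-1$) to confine the $G$-part to at most two distinct eigenvalues, so that \Cref{2eval} applies to it directly; you then need the extra budget count to exclude $s\in\{1,2\}$. Both routes are sound; the paper's gadget makes the endgame shorter, while yours cleanly decouples the gadget spectrum from the $G$-part and avoids any reliance on the smallest-eigenvalue characterization. The only points worth tightening are the implicit assumptions $3\mid n$ and $m\geq n$ (otherwise the instance is trivially negative), which the paper also leaves implicit.
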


\begin{proof} 
Let us describe a polynomial-time many-one reduction from \textsc{Partition into Triangles} on graphs of clique number 3 to $r$-\textsc{Eigenvalue Edge Deletion}. Consider an instance, say $G$, of \textsc{Partition into Triangles}, where $G$ is a graph, say on $n$ vertices and $m$ edges, with clique number $3$. Let us construct a graph, say $H$, from $G$, as follows: First, we add $G$ as it is. Next, for each $3\leq i\leq r+1$, we introduce $M:=m-n+1$ cliques, each of size $i$; let us refer to these cliques as \emph{dummy cliques}. That is, the graph $H$ is the  disjoint union of the graph $G$, $M$ dummy cliques of size $3$, $M$ dummy cliques of size $4$, $\ldots \ldots$, $M$ dummy cliques of size $r+1$. We set the budget to be $m-n$. Let us show that $G$ has $\frac{n}{3}$ pairwise vertex disjoint triangles if and only if $(H,m-n)$ is a \textbf{\textsc{YES}} instance of $r$-\textsc{Eigenvalue Edge Deletion}.
  
  \noindent $(\Rightarrow)$ Suppose that $G$ has $\frac{n}{3}$ pairwise vertex disjoint triangles, say $T_1, \ldots, T_{n/3}$. Let $S$ denote the set that consists of those $m-n$ edges of $G$ that do not belong to any of these $\frac{n}{3}$ triangles. Note that the graph $H\setminus S$ is the disjoint union of
  \begin{itemize}
  \item $\frac{n}{3}+M$ triangles, namely $T_1,\ldots, T_{n/3}$ and the $M$ dummy cliques of size $3$. They contribute two distinct eigenvalues, i.e., $-1$ and $2$.
  \item $M$ dummy cliques of size $4$. They contribute two distinct eigenvalues, i.e., $-1$ and $3$.\\
  \vdots\\
  \vdots
  \item $M$ dummy cliques of size $r+1$. They contribute two distinct eigenvalues, i.e., $-1$ and $r$.
  \end{itemize}
  So, the adjacency matrix of the graph $H\setminus S$ has $r$ distinct eigenvalues, namely $-1, 2, 3, \ldots\ldots, r$. Thus, $(H,m-n)$ is a \textbf{\textsc{YES}} instance of $r$-\textsc{Eigenvalue Edge Deletion}. 
  
  \noindent $(\Leftarrow)$ Suppose that $(H,m-n)$ is a \textbf{\textsc{YES}} instance of $r$-\textsc{Eigenvalue Edge Deletion}. That is, there exists $S\subseteq E(H)$ of size $\leq m-n$ such that the adjacency matrix of the graph obtained by deleting the edges of $S$ from $H$ has $\leq r$ distinct eigenvalues.
  
  Consider any $3\leq i\leq r+1$. Note that the number of $i$-sized dummy cliques, i.e., $M$, is $>m-n\geq |S|$. So, there's at least one $i$-sized dummy clique, say $C_i$, such that none of its edges is deleted. That is, no edge of $C_i$ belongs to $S$ and thus, it appears as a component of the graph $H\setminus S$, thereby contributing two distinct eigenvalues, namely $-1$ and $i-1$. Thus, it follows that the adjacency matrix of the graph $H\setminus S$ must have $-1, 2, 3, \ldots, r$ as its $r$ distinct eigenvalues. 
  
  Now, using Lemma \ref{smallest ev -1}, it is clear that the graph $H\setminus S$ must be a disjoint union of some cliques, whose sizes are $3, 4, \ldots, r+1$. So, as $G$ has clique number $3$, after removing those edges of $G$ that belong to $S$, we're left with $\frac{n}{3}$ pairwise vertex-disjoint triangles of $G$, as desired.
  
  This concludes the proof of Theorem \ref{rEED NP hard}.
\end{proof}

\section{Reducing eigenvalues by editing edges}

In this section, we show that $2$-\textsc{Eigenvalue Edge Editing} is \textsf{NP}-complete. We give a reduction from \textsc{Partition into Triangles}.

\begin{restatable}[]{theorem}{twoeee}
\label{2EEE NP hard}
$2$-\textsc{Eigenvalue Edge Editing} is \textsf{NP}-complete.
\end{restatable}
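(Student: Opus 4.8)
The plan is to give a polynomial-time many-one reduction from \textsc{Partition into Triangles} on graphs of clique number $3$ (\NP-complete by \cite{custic2015geometric}), specialising the dummy-clique idea of \Cref{rEED NP hard} to a \emph{single} clique size so that the target becomes a disjoint union of \emph{equal}-sized cliques. Membership in \NP is immediate: given an editing set $F$, form $H' := (V(G), E(G)\,\Delta\, F)$ and check in polynomial time, via the spectrum computation of \cite{eigenvalueComputationIso,eigenvalueComputationSymMat}, whether $A_{H'}$ has at most two distinct eigenvalues. Given an instance $G$ of \textsc{Partition into Triangles} on $n$ vertices and $m$ edges (we may assume $3 \mid n$ and $m \geq n$, else output a trivial \textbf{\textsc{NO}}), I would build the graph $H$ as the disjoint union of $G$ together with $M := 2(m-n)+1$ \emph{dummy triangles}, and set the budget to $k := m-n$.

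The forward direction is easy: if $V(G)$ partitions into $n/3$ triangles, then deleting the $m-n$ edges of $G$ lying outside these triangles turns $H$ into a disjoint union of $n/3 + M$ triangles, which by \Cref{2eval} has exactly two distinct eigenvalues; crucially, this solution uses only $m-n = k$ deletions and no additions. For the reverse direction, let $F$ be any editing set with $|F| \leq k$, and write $F = D \uplus A$, where $D \subseteq E(H)$ are the deleted edges and $A$ the added ones, and let $H' := (V(H), E(H)\,\Delta\, F)$. The argument has two key steps. First, since $|F| \leq m-n < M/2$ and each edge of $F$ is incident to at most two dummy triangles, at least one dummy triangle is untouched by $F$ and survives as a size-$3$ clique component of $H'$; hence by \Cref{2eval} the graph $H'$ must be a disjoint union of \emph{triangles}.

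Second --- and this is the observation that collapses editing to deletion --- a simple edge count gives a conservation law. Since $H'$ is a disjoint union of triangles on $|V(H)| = n+3M$ vertices, we have $|E(H')| = n+3M$; on the other hand $|E(H')| = (m+3M) - |D| + |A|$, so $|D| - |A| = m-n$. Combined with $|D| + |A| = |F| \leq m-n$, this forces $|A| = 0$ and $|D| = m-n$. Thus no edges are added. Because only deletions occur and each dummy triangle is an isolated $3$-vertex component of $H'$, no dummy-triangle edge may be deleted (that would leave a component that is not a triangle); hence all $m-n$ deletions lie inside $G$, and $G \setminus D$ is a disjoint union of triangles on $n$ vertices, i.e.\ a partition of $V(G)$ into $n/3$ triangles.

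Finally, the very same construction yields the implicitly claimed \NP-completeness of $2$-EED: in the deletion-only setting $A = \emptyset$ holds by fiat, and the identical untouched-triangle argument forces the common clique size to be $3$ and confines all deletions to $G$. The step I expect to be the main obstacle is making the reverse direction airtight, specifically the edge-counting conservation argument that rules out edge additions once the target size is pinned to $3$; the remaining ingredients --- the survival of a dummy triangle under a budget $F$ with $|F| < M/2$, and the impossibility of breaking a dummy triangle using deletions alone --- are routine counting.
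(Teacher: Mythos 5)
Your reduction is correct, but it takes a genuinely different route from the paper's. The paper also reduces from \textsc{Partition into Triangles}, but its gadget is local: it attaches two pendant triangles to \emph{every} vertex of $G$ (creating $4n$ ``saviour'' vertices of degree $2$), sets the budget to $m+n$, and in the reverse direction pins down the common clique size $x$ and rules out additions by playing two bounds on $|A|$ against each other --- an upper bound $|A|\leq \frac{7n(x-3)}{4}$ coming from the global edge count plus the budget, and a lower bound $|A|\geq \frac{4n(x-3)}{2}$ coming from the degree deficit $x-3$ of each saviour vertex --- which together force $x=3$ and $|A|=0$. You instead take $H$ to be $G$ plus $2(m-n)+1$ disjoint dummy triangles with budget $m-n$, force the clique size to $3$ by the survival of an untouched dummy triangle (correctly accounting for the fact that an \emph{added} edge can touch two dummies, hence the factor $2$ in $M$), and then kill additions with a single edge-conservation identity $|D|-|A|=m-n$ combined with $|D|+|A|\leq m-n$. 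Both arguments are airtight; yours is arguably cleaner in the reverse direction and aligns more transparently with the paper's proof of \Cref{rEED NP hard} (of which it is essentially the $r=2$ specialisation, made robust against additions), while the paper's construction avoids introducing $\Theta(m)$ extra components and exhibits the hardness already with a budget that permits additions in principle. One cosmetic remark: the clique-number-$3$ hypothesis on the source instance, which the paper needs for \Cref{rEED NP hard} to exclude surviving cliques of size larger than $3$ inside $G$, is not actually needed in your argument, since the untouched dummy triangle already forces every component of $H'$ to be a triangle.
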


\begin{proof}
  Let us describe a polynomial-time many-one reduction from \textsc{Partition into Triangles} to $2$-\textsc{Eigenvalue Edge Editing}. Consider an instance, say $G$, of \textsc{Partition into Triangles}, where $G$ is a graph on $n$ vertices and $m$ edges. Let us construct a graph $H$ based on $G$ as follows: for every vertex $v\in V(G)$, attach two triangles  to $v$, as shown below.
  \begin{center}
  \includegraphics[scale=0.65]{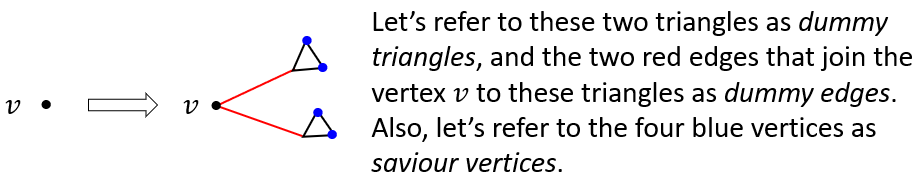}
  \end{center}

  See Figure \ref{2EEE example} for an illustration.

\begin{figure}
\centering
\includegraphics[scale=0.65]{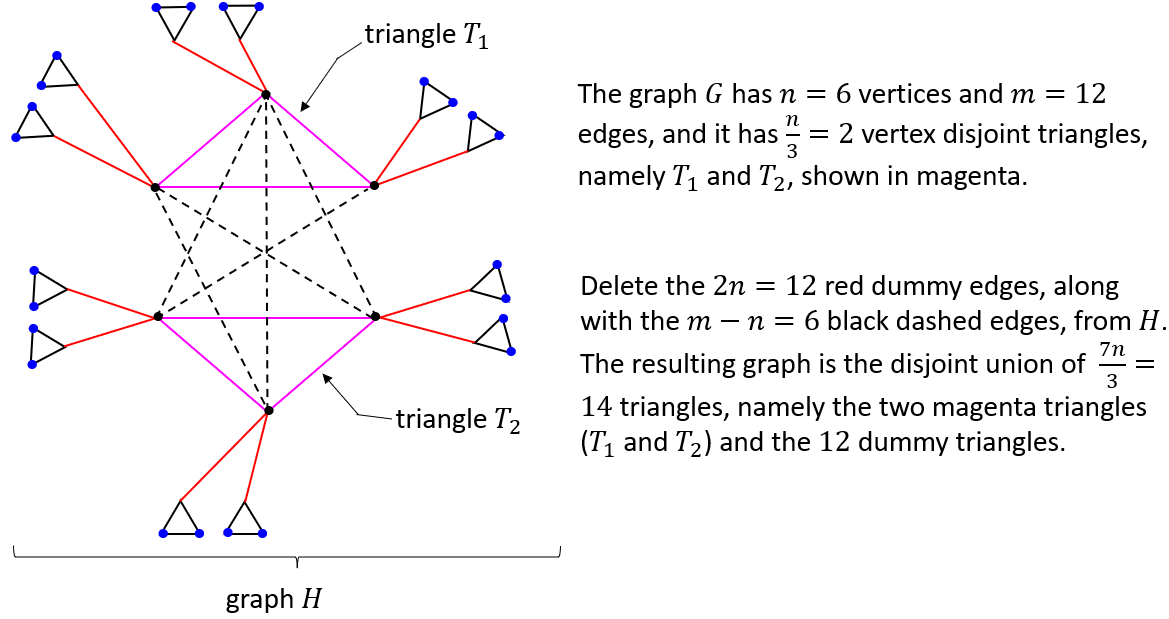}
\caption{An example illustrating the construction in \Cref{2EEE NP hard}.}
\label{2EEE example}
\end{figure}

  Note that $|V(H)|=7n$ and $|E(H)|=m+8n$. Let us show that $G$ has $\frac{n}{3}$ pairwise vertex disjoint triangles if and only if $(H,m+n)$ is a \textbf{\textsc{YES}} instance of $2$-\textsc{Eigenvalue Edge Editing}.
  
  \noindent ($\Rightarrow$) Suppose that $G$ has $\frac{n}{3}$ pairwise vertex disjoint triangles, say $T_1,\ldots, T_{n/3}$. Let $S\subseteq E(H)$ denote the set that  consists of the $2n$ dummy edges, along with those $m-n$ edges of $G$ that do not belong to any of these $\frac{n}{3}$ triangles. Note that the graph $H\setminus S$ is the disjoint union of $\frac{7n}{3}$ triangles, namely $T_1,\ldots, T_{n/3}$ and the $2n$ dummy triangles. Its adjacency matrix has two distinct eigenvalues, i.e., $-1$ and $2$. Thus, $(H,m+n)$ is a \textbf{\textsc{YES}} instance of $2$-\textsc{Eigenvalue Edge Editing}.
  

  \noindent $(\Leftarrow)$: Suppose that $(H,m+n)$ is a \textbf{\textsc{YES}} instance of $2$-\textsc{Eigenvalue Edge Editing}. That is, there exist $D\subseteq E(H)$ and $A\subseteq {V(H) \choose 2}\setminus E(H)$ such that: i) $|A|+|D|\leq m+n$, and ii) deleting the edges of $D$ from $H$, and adding the edges of $A$ to $H$, results in a graph, say $H'$, whose adjacency matrix has at most two distinct eigenvalues. Using Lemma \ref{2eval}, the graph $H'$ is a disjoint union of equal-sized cliques $\big($say, of size $x\big)$. As each of these $\frac{|V(H)|}{x}$ cliques has $ {x \choose 2}$ edges, the number of edges in $H'$ is
  $$\frac{|V(H)|}{x}\cdot {x \choose 2} = \frac{7n(x-1)}{2}$$ Also, we have $|E(H)|+|A|-|D|=|E(H')|$.
  Therefore, 
  \begin{equation}
  \label{eq1}
  (m+8n)+|A|-|D|=\frac{7n(x-1)}{2}
  \end{equation}
  Adding (\ref{eq1}) to the inequality $|A|+|D|\leq m+n$, we get 
  \begin{equation}
  \label{eq2}
  |A|\leq \frac{7n(x-3)}{4}
  \end{equation}
  Note that each saviour vertex has degrees $2$ and $x-1$ in $H$ and $H'$ respectively. So, each of the $4n$ saviour vertices is incident to $\geq x-3$ added edges $\big($i.e., edges of $A\big)$. Also, any edge of $A$ is incident to at most two saviour vertices. Therefore, 
  \begin{equation}
  \label{eq3}
  |A|\geq \frac{4n(x-3)}{2}
  \end{equation}
  Using (\ref{eq2}) and (\ref{eq3}), we get $x=3$ and $|A|=0$. Thus, the graph $H'$ is a disjoint union of $\frac{7n}{3}$ triangles, obtained from $H$ by only edge deletions: in other words, no edge additions are involved. This implies that we have $\frac{7n}{3}$ pairwise vertex disjoint triangles, say $T_1,\ldots, T_{7n/3}$, of the $7n$-vertex graph $H$. Note that the vertices of any dummy triangle belong to a unique triangle $\big($i.e., the dummy triangle itself$\big)$ in $H$. So, amongst $T_1,\ldots, T_{7n/3}$, we must have the $2n$ dummy triangles. Now, it is clear that the remaining $\frac{7n}{3}-2n=\frac{n}{3}$ triangles form a collection of pairwise vertex disjoint triangles in $G$, as desired. 
  \end{proof}
  


\section{Concluding Remarks} 

We considered the problem of modifying a graph optimally to reduce the number of distinct eigenvalues in the spectrum of its adjacency matrix. These problems turned out to be closely related to, but different from, modifications that aim to reduce the rank of the adjacency matrix and the diameter of the graph. 

The complexity of $r$-EEE for fixed $r \geqslant 3$ remains open. The parameterized complexity of $2$-EEE in the standard parameter is open, and the question of finding polynomial kernels for $2$-EVD and $2$-EED remains open as well. Studying these problems from the perspective of structural parameters or on directed graphs are interesting directions for future work.

\newpage 

\bibliography{refs}

\newpage

\end{document}